\newtheorem{proposition}{Proposition}
\newtheorem{lemma}{Lemma}
\newtheorem{definition}{Definition}
\newcommand{\mb}[1]{\mathbf{#1}}
\newcommand{\mc}[1]{\mathcal{#1}}
\newcommand{\mbb}[1]{\mathbb{#1}}
\begin{document}

\title{On the Mutual Information of Multi-RIS Assisted MIMO: From Operator-Valued Free Probability Aspect}

\date{} 

\author{Zhong~Zheng, ~\IEEEmembership{Member,~IEEE,}
		Siqiang~Wang,
		Zesong~Fei,~\IEEEmembership{Senior Member,~IEEE,}
		Zhi~Sun,~\IEEEmembership{Senior Member,~IEEE,}
		Jinhong~Yuan,~\IEEEmembership{Fellow,~IEEE}
	\thanks{Z. Zheng, S. Wang, and Z. Fei are with the School of Information and Electronics, Beijing Institute of Technology, Beijing, China. Z. Sun is with the Department of Electronic Engineering, Tsinghua University, Beijing, China. J. Yuan is with the School of Electrical Engineering and Telecommunications, UNSW Sydney, Sydney, Australia.}
}

\markboth{}%
{Zheng \MakeLowercase{\textit{et al.}}: On the Mutual Information of Multi-RIS MIMO Channels}

\maketitle

\begin{abstract}
The reconfigurable intelligent surface (RIS) is useful to effectively improve the coverage and data rate of end-to-end communications. In contrast to the well-studied coverage-extension use case, in this paper, multiple RIS panels are introduced, aiming to enhance the data rate of multi-input multi-output (MIMO) channels in presence of insufficient scattering. Specifically, via the operator-valued free probability theory, the asymptotic mutual information of the large-dimensional RIS-assisted MIMO channel is obtained under the Rician fading with Weichselberger's correlation structure, in presence of both the direct and the reflected links. Although the mutual information of Rician MIMO channels scales linearly as the number of antennas and the signal-to-noise ratio (SNR) in decibels, numerical results show that it requires sufficiently large SNR, proportional to the Rician factor, in order to obtain the theoretically guaranteed linear improvement. This paper shows that the proposed multi-RIS deployment is especially effective to improve the mutual information of MIMO channels under the large Rician factor conditions. When the reflected links have similar arriving and departing angles across the RIS panels, a small number of RIS panels are sufficient to harness the spatial degree of freedom of the multi-RIS assisted MIMO channels.

\end{abstract}

\begin{IEEEkeywords}
Reconfigurable intelligent surface, MIMO, Rician channel, mutual information, operator-valued free probability.
\end{IEEEkeywords}

\IEEEpeerreviewmaketitle

\section{Introduction}

In both the current and forthcoming generations of mobile communication systems, multi-input multi-output (MIMO) is one of the mainstream physical-layer techniques to improve the spectral efficiency and the reliability of the wireless communications~\cite{SaadMagazine2020}. In the favorable environments with rich scattering, MIMO is able to increase the achievable data rate linearly with the number of antennas~\cite{Telatar1999}. However, when the wireless systems operate in higher frequencies with larger bandwidth, such as the millimeter wave and terahertz bands, the radio signals are easily attenuated due to absorption and blockage. In this case, the MIMO channels typically have only a few dominating propagation paths and/or limited angular spread, which causes rank deficiency in the channel matrix that significantly degrades the MIMO channel capacity~\cite{ShinTIT2003}.

Recently, reconfigurable intelligent surface (RIS) has attracted substantial attentions and is foreseen to be an important component in the future communication systems~\cite{PanMagazine2021}. A typical RIS consists of a large number of low-power integrated electronic circuits, which can be programmed to modify the electromagnetic properties of the incoming radio waves in the desired frequency band~\cite{WuMagazine2020}, such as the phase and amplitude of the reflected signals from each programmable circuit. Therefore, by deploying some RIS panels in the environment, the signal's radiation pattern within the operating spectrum bands of the communication systems can be reconfigured to increase the number of independent paths with diversified angular spreads, thus increasing the rank of the MIMO channels. As an example, Fig.~\ref{figSystem} illustrates the transmissions between a base station (BS) and a user equipment (UE) in an urban canyon. In this scenario, without RIS deployment, the signals have to propagate through a scattering-limited area, where the direct propagation link $\mb{F}_0$ dominates the end-to-end channel, while other scattered/reflected components are severally attenuated by the building materials. In comparison, RIS panels are able to actively and effectively reflect the signals to increase the number of independent specular components, resulting in a total number of $K+1$ propagation links, including the direct link $\mb{F}_0$ and $K$ reflected links that consist of channels $\{\mb{F}_k\}_{1\le k\le K}$ between BS and RIS panels and channels $\{\mb{G}_k\}_{1\le k\le K}$ between RIS panels and UE.

\begin{figure}[t]
	\centerline{\includegraphics[width=0.6\columnwidth]{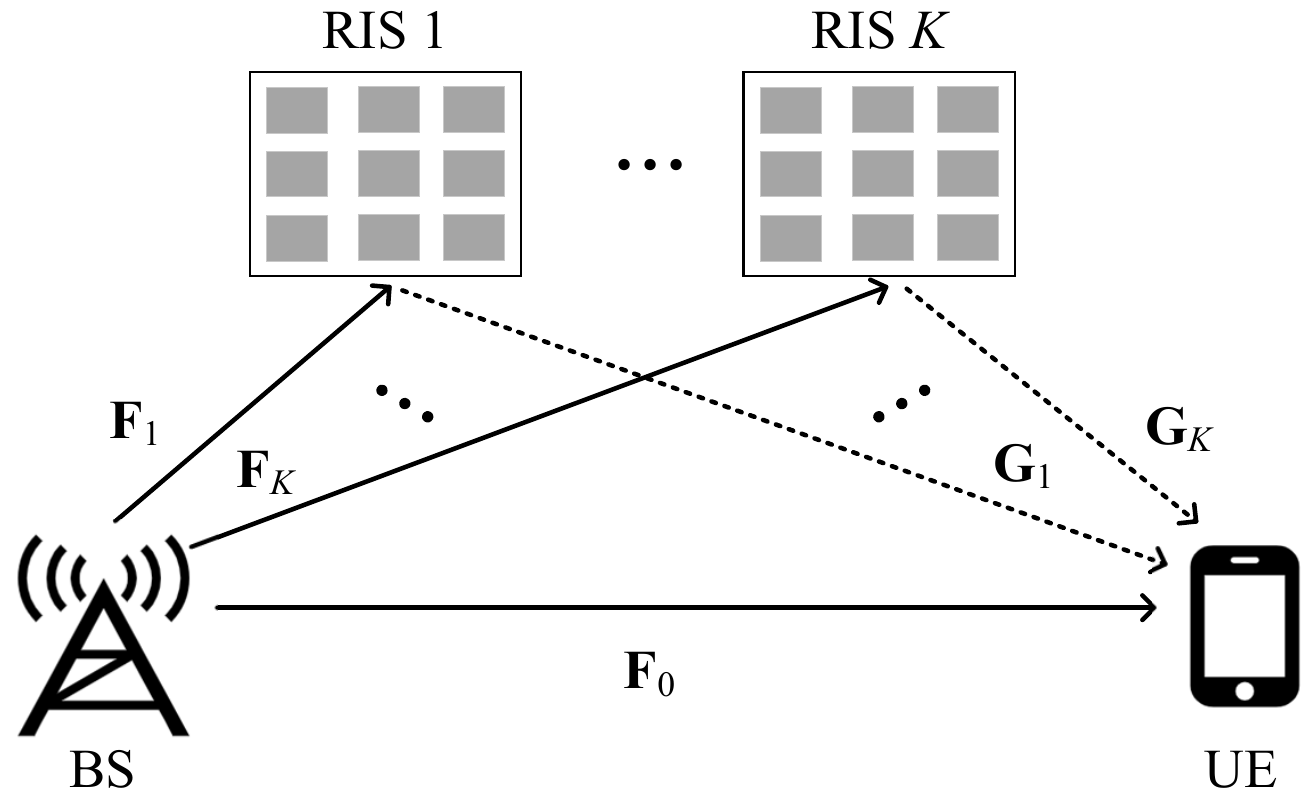}
	}
	\caption{Multi-RIS assisted MIMO communications.}
	\label{figSystem}
\end{figure}

There exist a number of studies focusing on the performance evaluation of the end-to-end communications assisted by a single or multiple RIS panels. When both the transmitter and receiver are equipped with a single antenna and a single RIS is deployed, the signal-to-noise-ratio (SNR) of such RIS-assisted single-input single-output (SISO) channel is proportional to the squared amplitude of the end-to-end effective channel. There are two typical theoretic frameworks to analyze the statistical properties of the SNR: One is based on the Meijer's G- and Fox's H-function systems~\cite{ZhangTVT2021}, which result in exact but rather complicated expressions. The other is to match the moments of the effective channel with classical random variables~\cite{YangWCL2020,SalhabWCL2021,DashCL2022,XieWCL2022}. In particular, when the direct link is blocked, the SNR distribution and the corresponding outage probability of the RIS-assisted communications are approximated by Gamma random variables, when the component channels are independently Rayleigh-faded~\cite{YangWCL2020}, independently Rician-faded~\cite{SalhabWCL2021}, correlated Rician-faded~\cite{DashCL2022}, and independently Nakagami-faded~\cite{XieWCL2022}, respectively. When both the direct and the reflected links exist, the Gamma-approximated SNR distribution and the finite block-length rate of the RIS-assisted channel are obtained in~\cite{HashemiTVT2021}, when all component channels are Rayleigh-faded.

When multiple RIS panels are deployed in the network, the RIS panels either work in exhaustive mode to jointly assist the end-to-end communications~\cite{YildirimTCOM2021}, or work in opportunistic mode, where only the RIS with maximum channel gains is selected~\cite{YangWCL2021}. In the former case, the end-to-end SNR is approximated by a Gaussian random variable due to the central limit theorem. The SNR and the average symbol error probability are derived for the phase shift keying signaling. In the latter case, the SNR of each reflected link corresponding to one RIS panel is approximated as a Gaussian random variable and the order-statistics is then obtained for the optimally selected RIS-assisting channel. In~\cite{DoTCOM2021}, a comprehensive performance comparison of those two operation modes is provided, under the scenario of multi-RIS assisted SISO channels assuming Nakagami-faded direct and reflected links.

In the case of MIMO systems, the available performance analysis of RIS-assisted communications is rather limited due to the challenge of understanding the statistical distribution of matrix-valued propagation channels, and results only exist for the single-RIS deployment. In~\cite{ShiTWC2022}, the authors consider the RIS-assisted MIMO communications, where the direct link is blocked and the reflected link is the concatenated Rayleigh-faded MIMO channels with single-sided correlation. The exact outage probability of such channel is derived by using the Mellin transform~\cite{Carter1977}. The result is expressed as the integration of product of multiple Meijer's G-functions, which is difficult to solve in practice. When the reflected link is the concatenated millimeter wave MIMO channels assuming Saleh-Valenzuela model~\cite{AyachTWC2014}, an upper bound of the ergodic achievable rate is derived in~\cite{LiTWCPress} using majorization theory and Jensen's inequality. In the RIS-assisted uplink multiple access channel, the asymptotic ergodic sum rate of the multi-user MIMO system is derived in~\cite{XuTCOM2021} by using the replica method, assuming that the reflected links are Rician-faded MIMO channels with Kronecker's correlation. In the same channel model as~\cite{XuTCOM2021}, the finite-SNR diversity-multiplexing tradeoff (DMT) of the RIS-assisted MIMO channel is analyzed in~\cite{ZhangJSTSP2022} by the martingale method. When both the direct and the reflected links exist, the asymptotic achievable rate of the single-RIS assisted MIMO channel is derived in~\cite{ZhangTWC2021} via replica method, assuming that all the component channels are Rician fading with Kronecker's correlation and all the channel dimensions grow to infinity.

Although the RIS-assisted MIMO communications have been investigated in~\cite{ShiTWC2022, LiTWCPress, XuTCOM2021, ZhangJSTSP2022, ZhangTWC2021}, the results therein are obtained for the single-RIS deployment under certain MIMO channel configurations. In contrast, this paper aims to provide the theoretic framework that analyzes the general multi-RIS assisted MIMO communications under arbitrary Rician fading with Weichselberger's correlation structure~\cite{WeichselbergerTWC06}. Such a model fits a wider range of realistic MIMO channels compared to the conventional Kronecker's correlation structure. Based on the above system settings, we first embed the component MIMO channel matrices into a large block matrix. Then, an operator-valued probability space over the algebra of the constructed block matrices is defined, where the operator-valued Cauchy transform is defined and is shown to be closely related to the classic Cauchy transform of the channel Gram matrix. The operator-valued Cauchy transform is then derived by leveraging the freeness over the defined probability space and the additive free convolution machinery. Based on the obtained Cauchy transform, the probability distribution of the eigenvalue of the channel Gram matrix as well as the mutual information of the multi-RIS assisted MIMO channel can be calculated, which avoids time-consuming Monte Carlo simulations. Numerical results show that in presence of strong line-of-sight conditions, although the mutual information could scale linearly as the number of antennas and the SNRs (in decibels), the SNR has to be sufficiently large in order to exhibit such linear scaling law. On the other hand, deploying additional RIS panels could effectively improve the channel's mutual information and thus, alleviate the SNR requirement.

The rest of this article is organized as follows. The signal model, the channel model, and the mutual information of the MIMO channel under consideration are introduced in Section~\ref{secModel}. In Section~\ref{secOperatorFree}, the operator-valued probability space is introduced and the main result of the Cauchy transform of the channel Gram matrix is given. Numerical simulation results on the spectral distribution and the mutual information of the MIMO channels are in Section~\ref{secResult}. Section~\ref{secConclude} concludes the main findings of this article.

\emph{Notations.} Throughout the paper, vectors and matrices are represented by lower-case and upper-case bold-face letters, respectively. The complex column vector with length $n$ is denoted as $\mbb{C}^n$. We use $\mc{CN}(\mb{0},\mb{A})$ to denote the zero-mean complex Gaussian vector with covariance matrix $\mb{A}$ and $\mb{I}_n$ is an $n\times n$ identity matrix. The superscript $(\cdot)^{\dag}$ denotes the matrix conjugate-transpose operation and $(\cdot)^{\mathrm{T}}$ is matrix transpose. We denote $\mathrm{Tr}(\mb{A})$ as the trace of $n\times n$ matrix $\mb{A}$. The notation $\mbb{E}[\cdot]$ denotes the expectation, and $\det(\cdot)$ denotes the matrix determinant.

\section{System Model}\label{secModel}

\subsection{Signal Model}
Consider a MIMO communication channel between a transmitter equipped with $T$ antennas and a receiver equipped with $R$ antennas. The transmissions are assisted by $K$ RIS panels, which reflect the impinging signals via their reflecting elements and each RIS panel is equipped with $L_k$ reflecting elements, $1\le k\le K$. For notational simplicity, we define $R = L_0$ and use these two symbols interchangeably. 

Denote the transmitted signal as $\mb{x}\in\mbb{C}^{T}$ and the additive noise at the receiver as $\mb{n}\in\mbb{C}^R$. The received signal $\mb{y}\in\mbb{C}^R$ is expressed as
\begin{align}
	\mb{y} = \left(\mb{F}_0 + \sum_{k=1}^{K} \sqrt{\rho_k}\mb{G}_k\mb{F}_k\right)\mb{x} + \mb{n},\label{eqy}
\end{align}
where the $R\times T$ matrix $\mb{F}_0$ denotes the direct channel between transmitter and receiver, the $L_k\times T$ matrix $\mb{F}_k$ denotes the channels between the transmitter and the $k$-th RIS panel, and $0<\rho_k\le 1$ denotes the relative channel gain of the $k$-th reflected channel via the $k$-th RIS, compared to the direct channel. The $R\times L_k$ matrix $\mb{G}_k$ denotes phase-shifted reflected channel between the $k$-th RIS and the receiver, modeled as
\begin{align}
	\mb{G}_k = \mb{R}_k\mb{\Theta}_k,
\end{align}
where the $R\times L_k$ matrix $\mb{R}_k$ denotes the channel coefficients, and the diagonal matrix $\mb{\Theta}_k = \mathrm{diag}\left(e^{i\phi_{k,1}},\ldots,e^{i\phi_{k,L_k}}\right)$ contains the phase-shifts of the reflecting elements, where $0\le\phi_{k,l}\le 2\pi$ denotes the phase-shift of the $l$-th element of the $k$-th RIS.  

We adopt the following assumptions on the signal and the channels:
\begin{enumerate}
	\item[(A1)]\label{assumption1} The signal $\mb{x}$ is Gaussian distributed with uniform power allocation, i.e., $\mb{x}\sim\mc{CN}(\mb{0}_T, P\mb{I}_T)$, where $P$ is the average power of the signals from each transmit antenna;
	\item[(A2)] The noise $\mb{n}$ is assumed to be a white Gaussian random vector with \emph{i.i.d.} zero-mean entries, i.e., $\mb{n}\sim\mc{CN}(\mb{0}_R, \sigma^2\mb{I}_R)$, where $\sigma^2$ denotes the variance of the noise;
	\item[(A3)] The channel coefficients $\{\mb{F}_k\}_{0\le k\le K}$ and $\{\mb{R}_k\}_{1\le k\le K}$ are block-faded, which keeps constant within the coherence time, while changing randomly and independently in the next coherence time. The phase shifts $\{\mb{\Theta}_k\}_{1\le k\le K}$ are   assumed to be fixed.
\end{enumerate}

Note that without the direct link $\mb{F}_0$, channel models similar to (\ref{eqy}) have been also studied in \cite{ShinTIT2003,ZhengTIT2017,MullerTIT2002a} for the keyhole channel, the Rayleigh-product channel, and the double-scattering channel, respectively, by using different theoretic techniques, which cannot be applied here.

\subsection{Channel Model}
In order to characterize the directivity and the spatial correlation of the channels between antenna arrays, we adopt the non-central Weichselberger's MIMO model for each component links~\cite{WeichselbergerTWC06}, such that
\begin{align}
	\mb{F}_k & = \overline{\mb{F}}_k + \widetilde{\mb{F}}_k = \overline{\mb{F}}_k + \mb{U}_k (\mb{M}_k \odot \mb{X}_k) \mb{V}_k^{\dagger},\quad 0\le k\le K,\label{eqFk}\\
	\mb{G}_k &= \overline{\mb{G}}_k + \widetilde{\mb{G}}_k = \overline{\mb{G}}_k + \frac{1}{\sqrt{r_k}}\mb{W}_k (\mb{N}_k\odot\mb{Y}_k) \mb{S}_k^{\dagger},\quad 1\le k\le K,\label{eqGk}
\end{align}
where $\overline{\mb{F}}_k$ and $\overline{\mb{G}}_k$ are the fixed specular components of $\mb{F}_k$ and $\mb{G}_k$, respectively. The random scattering components are captured by $\widetilde{\mb{F}}_k$ and $\widetilde{\mb{G}}_k$, where $\mb{U}_k$, $\mb{V}_k$, $\mb{W}_k$, and $\mb{S}_k$ are deterministic unitary matrices. The deterministic matrices $\mb{M}_k$ and $\mb{N}_k$ represent the variance profiles of $\widetilde{\mb{F}}_k$ and $\widetilde{\mb{G}}_k$, respectively, each having non-negative real elements. The $L_k\times T$ random matrix $\mb{X}_k$ and the $R\times L_k$ random matrix $\mb{Y}_k$ are \emph{i.i.d.} complex Gaussian distributed with entries having zero mean and variance $1/T$, i.e., $[\mb{X}_k]_{i,j}\sim\mc{CN}(0,1/T)$ and $[\mb{Y}_k]_{i,j}\sim\mc{CN}(0,1/T)$. We denote $r_k$ as the ratio between $L_k$ and $T$, i.e., $r_k = L_k/T$, $1\le k\le K$. The operator $\odot$ denotes the element-wise matrix multiplication. Note that the phase-shift matrix $\mb{\Theta}_k$, $1\le k\le K$, is also unitary and can be absorbed into the deterministic matrices $\overline{\mb{G}}_k$ and $\mb{S}_k$. The ratio between the power of fixed specular component and the random scattering component is defined as the Rician factor of the MIMO channel, i.e.,
\begin{align}
	\kappa_{k}^{(F)} = \frac{||\overline{\mb{F}}_k||_\mathrm{F}^2}{\mbb{E}[||\widetilde{\mb{F}}||_\mathrm{F}^2]},\  \mathrm{and}\ \kappa_{k}^{(G)} = \frac{||\overline{\mb{G}}_k||_\mathrm{F}^2}{\mbb{E}[||\widetilde{\mb{G}}||_\mathrm{F}^2]},
\end{align}
where $||\cdot||_\mathrm{F}$ denotes the Frobenius norm of a matrix.

For the correlated MIMO channel $\mb{G}_k$, the one-sided correlation function $\eta_k(\widetilde{\mb{C}}) = \mbb{E}[\widetilde{\mb{G}}_k^\dagger\widetilde{\mb{C}} \widetilde{\mb{G}}_k]$ parameterized by an Hermitian matrix $\widetilde{\mb{C}}$ is given by~\cite[Thm.~1]{WeichselbergerTWC06} as 
\begin{align}
	\eta_k(\widetilde{\mb{C}}) &= \mbb{E}[\widetilde{\mb{G}}_k^\dagger\widetilde{\mb{C}} \widetilde{\mb{G}}_k] = \frac{1}{L_k} {\mb{S}}_k\mb{\Pi}_k(\widetilde{\mb{C}}){\mb{S}}_k^\dagger,\quad 1\le k\le K,
\end{align}
where the $L_k\times L_k$ diagonal matrix $\mb{\Pi}_k(\widetilde{\mb{C}})$ contains the diagonal entries
\begin{align}
	\left[\mb{\Pi}_k(\widetilde{\mb{C}})\right]_{i,i} = \sum_{j = 1}^{R} \left([\mb{N}_k]_{j,i}\right)^2 \left[\mb{W}_k^\dagger \widetilde{\mb{C}} \mb{W}_k\right]_{j,j},\quad 1\le i\le L_k.
\end{align}
The other one-sided correlation function $\widetilde{\eta}_k({\mb{C}}_k) = \mbb{E}[\widetilde{\mb{G}}_k{\mb{C}}_k \widetilde{\mb{G}}_k^\dagger]$ parameterized by $\mb{C}_k$ is given by
\begin{align}
	\widetilde{\eta}_k({\mb{C}}_k) = \mbb{E}[\widetilde{\mb{G}}_k{\mb{C}}_k \widetilde{\mb{G}}_k^\dagger] = \frac{1}{L_k}\mb{W}_k \widetilde{\mb{\Pi}}_k({\mb{C}}_k) \mb{W}_k^\dagger,\quad 1\le k\le K,
\end{align}
where the $R\times R$ diagonal matrix $\widetilde{\mb{\Pi}}_k({\mb{C}}_k)$ contains the diagonal entries
\begin{align}
	\left[\widetilde{\mb{\Pi}}_k({\mb{C}}_k)\right]_{i,i} = \sum_{j=1}^{L_k} \left([\mb{N}_k]_{i,j}\right)^2 \left[ {\mb{S}}_k^\dagger {\mb{C}}_k {\mb{S}}_k \right]_{j,j}, \quad 1\le i\le R.
\end{align}
Similarly, for $0\le k\le K$, the two parameterized one-sided correlation functions of the matrix $\widetilde{\mb{F}}_k$ are given by:
\begin{align}
	\zeta_k(\mb{D}_k) &= \mbb{E}[\widetilde{\mb{F}}_k^\dagger\mb{D}_k\widetilde{\mb{F}}_k] = \frac{1}{T} \mb{V}_k\mb{\Sigma}_k(\mb{D}_k)\mb{V}_k^\dagger,\\
	\widetilde{\zeta}_k(\widetilde{\mb{D}}) &= \mbb{E}[\widetilde{\mb{F}}_k\widetilde{\mb{D}}\widetilde{\mb{F}}_k^\dagger] = \frac{1}{T}\mb{U}_k \widetilde{\mb{\Sigma}}_k(\widetilde{\mb{D}}) \mb{U}_k^\dagger,
\end{align}
where the $T\times T$ diagonal matrix $\mb{\Sigma}_k(\mb{D}_k)$ and the $L_k\times L_k$ diagonal matrix $\widetilde{\mb{\Sigma}}_k(\widetilde{\mb{D}})$ respectively contain the diagonal entries
\begin{align}
	\left[\mb{\Sigma}_k(\mb{D}_k)\right]_{i,i} &= \sum_{j = 1}^{L_k} \left([\mb{M}_k]_{j,i}\right)^2 \left[\mb{U}_k^\dagger \mb{D}_k \mb{U}_k\right]_{j,j},\quad 1\le i\le T,\\
	\left[\widetilde{\mb{\Sigma}}_k(\widetilde{\mb{D}})\right]_{i,i} &= \sum_{j=1}^{T} \left([\mb{M}_k]_{i,j}\right)^2 \left[ \mb{V}_k^\dagger \widetilde{\mb{D}} \mb{V}_k \right]_{j,j}, \quad 1\le i\le L_k.
\end{align}

In addition, since the channels $\{\mb{F}_k\}_{0\le k\le K}$ and $\{\mb{G}_k\}_{1\le k\le K}$ are spatially separated, channels correspond to different links are assumed to be independent.

\subsection{Mutual Information of Multi-RIS MIMO Channel}
Due to the assumptions~(A1)-(A3), the channel (\ref{eqy}) is a Gaussian MIMO channel and its mutual information is given by the well-known Telatar's formula~\cite{Telatar1999} as
\begin{align}
	I(\gamma) = \log\det\left(\mb{I}_R + \gamma \mb{H}\mb{H}^\dagger\right),\label{eqI_HH}
\end{align}
where $\gamma = P/\sigma^2$ is the average SNR, and the end-to-end channel $\mb{H}$ is given by
\begin{align}\label{eqH}
	\mb{H} = \mb{F}_0 + \sum_{k=1}^{K} \sqrt{\rho}_k\mb{G}_k\mb{F}_k.
\end{align}
The channel $\mb{H}$ can be factorized as the product of two matrices $\mb{G}$ and $\mb{F}$ as
\begin{align}
	\mb{H} &=  \mb{G}\mb{F} = \begin{bmatrix}
		\mb{I}_R & \sqrt{\rho_1} \mb{G}_1 & \ldots & \sqrt{\rho_K} \mb{G}_K
	\end{bmatrix} \begin{bmatrix}
		\mb{F}_0 \\ \mb{F}_1 \\ \vdots \\ \mb{F}_K
	\end{bmatrix}.
\end{align}
Denoting $L = \sum_{k = 0}^{K} L_k$, $\mb{G} = \begin{bmatrix}
	\mb{I}_R & \sqrt{\rho_1} \mb{G}_1 & \ldots & \sqrt{\rho_K} \mb{G}_K
\end{bmatrix}$ is a $R\times L$ block matrix and $\mb{F} = \left[\mb{F}_0^\mathrm{T}, \ldots, \mb{F}_K^\mathrm{T}\right]^\mathrm{T}$ is a $L \times T$ block matrices. 

Letting $\mb{B} = \mb{H}\mb{H}^\dagger = \mb{G}\mb{F}\mb{F}^\dagger\mb{G}^\dagger$, the mutual information (\ref{eqI_HH}) can be rewritten as
\begin{align}
	I(\gamma) = R\  \mc{V}_\mb{B}(\gamma) = R \int_0^\infty \log(1+\gamma t)f_{\mb{B}}(t)\mathrm{d}t,\label{eqI_B}
\end{align}
where $\mc{V}_\mb{B}(x)$ is the Shannon transform of the matrix $\mb{B}$~\cite{DebbahBook}, and $f_{\mb{B}}(t)$ is the probability density function (PDF) of the eigenvalue of $\mb{B}$. Applying the relation between the Shannon transform and the corresponding Cauchy transform~\cite{DebbahBook}, the mutual information (\ref{eqI_B}) can be rewritten as
\begin{align}
	I(\gamma) = R\ \int_0^{\gamma} \left(\frac{1}{t} + \frac{1}{t^2}\mc{G}_{\mb{B}}\left(-\frac{1}{t}\right)\right)\mathrm{d}t,
\end{align}
where $\mc{G}_{\mb{B}}(z)$ is the Cauchy transform of $\mb{B}$ and is defined as
\begin{align}
	\mc{G}_{\mb{B}}(z) = \int_{0}^{\infty} \frac{1}{z-t} f_{\mb{B}}(t)\mathrm{d} t = \frac{1}{R}\mathrm{Tr}\circ \mbb{E}\left[\left(z\mb{I} - \mb{B}\right)^{-1}\right] = \tau_R\left(\left(z\mb{I} - \mb{B}\right)^{-1}\right). 
\end{align}
Here, $\tau_R(\mb{X})$ is the composite function $\frac{1}{R}\mathrm{Tr}\circ \mbb{E}[\mb{X}]$. Note that the PDF $f_{\mb{B}}(t)$ has an one-to-one mapping with the Cauchy transform $\mc{G}_{\mb{B}}(z)$ via the inverse transform
\begin{align}
	f_{\mb{B}}(t) = -\frac{1}{\pi} \lim_{\epsilon\rightarrow 0}\Im(\mc{G}_{\mb{B}}(t + i \epsilon)),\label{eqfB_invCauchy}
\end{align}
where $\Im(\cdot)$ denotes the imaginary part of the complex number. Therefore, the problem of finding the mutual information $I(\gamma)$ and the PDF $f_{\mb{B}}(t)$ are amount to finding the Cauchy transform $\mc{G}_{\mb{B}}(z)$ of product of matrices $\mb{B} = \mb{G}\mb{F}\mb{F}^\dagger\mb{G}^\dagger$. In the next section, we will resort to a linearization trick and the operator-valued free probability theory to derive the expression of~$\mc{G}_\mb{B}(z)$. 

\section{Asymptotic Eigenvalue Distribution via Operator-Valued Free Probability Theory}\label{secOperatorFree}

In the classic free probability theory, it is common to combine the Cauchy transform and the free multiplicative convolution to obtain the limiting spectral distribution of product of random matrices. For example, in \cite{MullerTIT2002}, the limiting spectral distribution of the concatenated MIMO channels of the form
\begin{align}\label{eqProdH}
	\left(\prod_{k=1}^{K}\mb{H}_k\right) \left(\prod_{k=1}^{K}\mb{H}_k\right)^\dagger
\end{align}
is derived, where $\mb{H}_k$ is $N_k\times N_{k-1}$ random matrix and has \emph{i.i.d.} zero-mean entries, unitarily invariant, and independent of each other. Such assumptions, in the language of free probability theory, is equivalent to requiring freeness among families of random variables as specified below. 

Let $\mc{A}$ be a unital algebra and $\mc{B} \subset \mc{A}$ be a unital subalgebra. For $\bm{\mc{H}}\in\mc{A}$, a linear map $\mbb{E}_{\mc{B}}[\bm{\mc{H}}]: \mc{A}\rightarrow\mc{B}$ is a $\mc{B}$-valued conditional expectation, if $\mbb{E}_{\mc{B}}[b] = b$ for all $b\in\mc{B}$, and $\mbb{E}_{\mc{B}}[b_1\bm{\mc{H}}b_2] = b_1 \mbb{E}_{\mc{B}}[\bm{\mc{H}}] b_2$ for all $\bm{\mc{H}}\in\mc{A}$ and $b_1, b_2\in\mc{B}$. Then, a $\mc{B}$-valued probability space is denoted as $(\mc{A}, \mbb{E}_{\mc{B}}, \mc{B})$, consisting of $\mc{B}\subset\mc{A}$ and the linear functional $\mbb{E}_{\mc{B}}$. In addition, let $\mc{A}_1,\ldots,\mc{A}_K$ be the subalgebras of $\mc{A}$ with $\mc{B}\subset\mc{A}_k$ for all $1\le k\le K$. We also let $\{\bm{\mc{H}}_k\in\mc{A}_k, 1\le k\le K\}$ denote a family of operator-valued random variables, which are free with amalgamation over $\mc{B}$ according to the following definition.
\begin{definition}\label{defConvention}
	Let $n$ be an arbitrary integer. The families of random variables $\{\bm{\mc{H}}_1,\ldots,\bm{\mc{H}}_K\}$ are  free with amalgamation over $\mc{B}$, if for every family of index $\{k_1,\ldots,k_n\}\subset\{1,\ldots,K\}$ with $k_1\neq k_2$, \ldots, $k_{n-1}\neq k_n$, and every family of polynomials $\{P_1,\ldots,P_n\}$ satisfying $\mbb{E}_\mc{B}[P_j(\bm{\mc{H}}_{k_j})] = 0$, $j\in\{1,\ldots,n\}$, we have $\mbb{E}_{\mc{B}}\left[\prod_{j=1}^{n} P_j(\bm{\mc{H}}_{k_j})\right] = 0$.
\end{definition}

In order to observe the freeness among families of random matrices $\{\mb{H}_k,\mb{H}_k^\dagger\}_{1\le k\le K}$ in (\ref{eqProdH}), we can construct the random variable $\bm{\mc{H}}_k$ as $\bm{\mc{H}}_k = \mb{H}_k\mb{H}_k^\dagger$. Let $\mc{C}$ denote the algebra of complex random variables. We define $\mc{A}_k = \mb{M}_{N_k}(\mc{C})$ as the algebra of $N_k\times N_k$ complex Hermitian matrices, $\mc{B} = \mc{C}$ and the linear functional $\mbb{E}_{\mc{B}}$ as $\mbb{E}_{\mc{B}} = \frac{1}{N_k}\mathrm{Tr}\circ\mbb{E}$. Then, as specified in Definition~\ref{defConvention}, the asymptotic freeness among $\left\{\bm{\mc{H}}_k\right\}_{1\le k\le K}$ over $\mc{C}$ has been established in some of the classic free probability theory, such as in~\cite{Capitaine2007}, which further enables one to apply free multiplicative convolution~\cite{MullerTIT2002} to obtain the limiting spectral distribution of the concatenated MIMO channels.

However, in the considered problem with $\mb{B} = \mb{G}\mb{F}\mb{F}^\dagger\mb{G}^\dagger$, both $\mb{G}$ and $\mb{F}$ are non-central and with non-trivial spatial correlations, and thus, are not free over $\mc{C}$ in the classic free probability aspect. More precisely, $\mb{G}\mb{G}^\dagger$ and $\mb{F}\mb{F}^\dagger$ are not free with respect to the linear functional $\tau_R = \frac{1}{R}\mathrm{Tr}\circ\mbb{E}$. Yet, as will be shown in the remaining of this section, via a linearization trick, the random matrix $\mb{B}$ of interest can be embedded into a larger block matrix, which can be then separated as the sum of a deterministic matrix and a random matrix. Instead of invoking the classic freeness over $\mc{C}$, we are able to elevate them as the operator-valued variables, which are shown to be asymptotically free in the operator-valued probability space. The limiting spectral distribution of their sum can be then obtained by using the operator-valued free additive convolution.

\subsection{Linearization Trick and Operator-Valued Probability Space}

Let $n$ denote $2L+R+T$ and $\mc{M} = \mb{M}_n(\mc{C})$ denote the algebra of $n\times n$ complex random matrices. Although the original formulation of $\mb{B}$ is in the form of product of two random matrices that are not free with respect to $\tau_R$, we could instead construct a block matrix $\mb{L}\in\mc{M}$, whose operator-valued Cauchy transform can be properly defined and is directly related to the conventional Cauchy transform of $\mb{B}$. 

By using the Anderson's linearization trick as described in~\cite[Prop. 3.4]{Speicher2013}, we can construct a block matrix $\mb{L}\in\mc{M}$ as follows
\begin{align}
	\mb{L} = \begin{bmatrix}
		\mb{L}^{(1,1)} & \mb{L}^{(1,2)}\\ 
		\mb{L}^{(2,1)} & \mb{L}^{(2,2)}
	\end{bmatrix} = \left[\begin{array}{c:ccc}
		\mb{0}_{R\times R} & \mb{0}_{R\times L} & \mb{0}_{R\times T} & \mb{G}\\\hdashline
		\mb{0}_{L\times R} & \mb{0}_{L\times L} & \mb{F} & -\mb{I}_{L}\\
		\mb{0}_{T\times R} & \mb{F}^\dagger & -\mb{I}_{T} & \mb{0}_{T\times L}\\
		\mb{G}^\dagger & -\mb{I}_{L} & \mb{0}_{L\times T} & \mb{0}_{L\times L}
	\end{array}\right],\label{eqBL}
\end{align}
where the matrix blocks $\left\{\mb{L}^{(i,j)}\right\}$ correspond to the partitions shown on the right-hand-side (RHS) of~(\ref{eqBL}). In addition, let us consider the sub-algebra $\mc{D}\subset\mc{M}$ as the $n\times n$ block diagonal matrix. For each $\mb{K}\in\mc{D}$, it is defined as
\begin{align}
	\mb{K} = \mathrm{blkdiag}\left(\widetilde{\mb{C}}, \mb{D}, \widetilde{\mb{D}}, \mb{C}\right),\label{eqK}
\end{align}
where $\widetilde{\mb{C}}$ is a $R\times R$ sub-matrix and $\widetilde{\mb{D}}$ is a $T\times T$ sub-matrix. The $L\times L$ block diagonal matrices $\mb{C}$ and $\mb{D}$ are defined as $\mb{C} = \mathrm{blkdiag}\left\{\mb{C}_0, \ldots, \mb{C}_K\right\}$ and $\mb{D} = \mathrm{blkdiag}\left\{\mb{D}_0, \ldots, \mb{D}_K\right\}$, respectively, where $\mb{C}_k$ and $\mb{D}_k$ are $L_k\times L_k$ sub-matrices. In (\ref{eqK}), all the involved sub-matrices $\widetilde{\mb{C}}$, $\widetilde{\mb{D}}$, $\left\{\mb{C}_k\right\}_{0\le k\le K}$, and $\left\{\mb{D}_k\right\}_
{0\le k\le K}$ are Hermitian matrices. 

For $\mb{X}\in\mc{M}$, we define $\mb{X}_{\widetilde{\mb{C}}}$, $\mb{X}_{\widetilde{\mb{D}}}$, $\left\{\mb{X}_{\mb{C}_k}\right\}_{0\le k\le K}$, and $\left\{\mb{X}_{\mb{D}_k}\right\}_{0\le k\le K}$ as the sub-blocks of $\mb{X}$, corresponding to the same diagonal sub-blocks $\widetilde{\mb{C}}$, $\widetilde{\mb{D}}$, $\left\{\mb{C}_k\right\}_{0\le k\le K}$, and $\left\{\mb{D}_k\right\}_{0\le k\le K}$ in the matrix $\mb{K}$. Then, we define the linear functional $\tau_\mc{D}: \mc{M}\rightarrow \mc{D}$ as
\begin{align}
	\tau_\mc{D}(\mb{X}) = \mathrm{id}\circ\mbb{E}_\mc{D}\left[\mb{X}\right],
\end{align}
where $\mathrm{id}$ denotes the identity operator on a Hilbert space and the expectation $\mbb{E}_\mc{D}\left[\mb{X}\right]$ is defined as
\begin{align}
	\mbb{E}_\mc{D}\left[\mb{X}\right] = \left[\begin{array}{c:c:c:c}
		\mbb{E}[\mb{X}_{\widetilde{\mb{C}}}] & & & \\\hdashline
		& \mbb{E}[\mb{X}_{\mb{D}}] & & \\\hdashline
		& & \mbb{E}[\mb{X}_{\widetilde{\mb{D}}}] & \\\hdashline
		& & &  \mbb{E}[\mb{X}_{\mb{C}}]
	\end{array}\right],\label{eqEDX}
\end{align}
and $\mbb{E}[\mb{X}_{\mb{C}}] = \mathrm{blkdiag}\left\{\mbb{E}[\mb{X}_{\mb{C}_0}], \ldots, \mbb{E}[\mb{X}_{\mb{C}_K}]\right\}$, $\mbb{E}[\mb{X}_{\mb{D}}] = \mathrm{blkdiag}\left\{\mbb{E}[\mb{X}_{\mb{D}_0}], \ldots, \mbb{E}[\mb{X}_{\mb{D}_K}]\right\}$. Then, we can define an operator-valued probability space $(\mc{M}, \tau_{\mc{D}}, \mc{D})$. For the $\mc{M}$-valued random variable $\mb{L}\in (\mc{M},\tau_\mc{D},\mc{D})$, its $\mc{D}$-valued Cauchy transform is defined as
\begin{align}
	\mc{G}_{\mb{L}}^{\mc{D}}(\mb{\Lambda}(z)) = \mathrm{id}\circ\mbb{E}_{\mc{D}}\left[\left(\mb{\Lambda}(z) - \mb{L}\right)^{-1}\right] = \tau_{\mc{D}}\left(\left(\mb{\Lambda}(z) - \mb{L}\right)^{-1}\right),\label{eqGBL0}
\end{align}
where $\mb{\Lambda}(z)\in\mc{M}$ denotes the $n\times n$ diagonal matrix as
\begin{align}
	\mb{\Lambda}(z) = \begin{bmatrix}
		z\mb{I}_R & \mb{0}_{R\times (2L+T)}\\
		\mb{0}_{(2L+T)\times R} & \mb{0}_{(2L+T)\times (2L+T)} 
	\end{bmatrix}.\label{eqLambdaz}
\end{align}

By substituting (\ref{eqBL}) and (\ref{eqLambdaz}) into (\ref{eqGBL0}) and invoking Lemma~\ref{lemmaBlockInv2} in the Appendix~\ref{appLemma}, we obtain
\begin{align}
	\mc{G}_{\mb{L}}^{\mc{D}}(\mb{\Lambda}(z)) &= \mathrm{id}\circ \mbb{E}_{\mc{D}}\begin{bmatrix}
		\left(z \mb{I}_{R} + \mb{L}^{(1,2)}\left(\mb{L}^{(2,2)}\right)^{-1}\mb{L}^{(2,1)}\right)^{-1} & -\mb{L}^{(1,2)}\left(z\mb{L}^{(2,2)} + \mb{L}^{(2,1)}\mb{L}^{(1,2)}\right)^{-1} \\
		 -\left(z\mb{L}^{(2,2)} + \mb{L}^{(2,1)}\mb{L}^{(1,2)}\right)^{-1}\mb{L}^{(2,1)} & -\left(\mb{L}^{(2,2)} + z^{-1}\mb{L}^{(2,1)}\mb{L}^{(1,2)}\right)^{-1}
	\end{bmatrix}.\label{eqBL1}
\end{align}
In particular, the upper-left block of (\ref{eqBL1}) can be explicitly written as
\begin{align}
	\left(z \mb{I}_{R} + \mb{L}^{(1,2)}\left(\mb{L}^{(2,2)}\right)^{-1}\mb{L}^{(2,1)}\right)^{-1} = \left(z\mb{I}_R - \mb{G}\mb{F}\mb{F}^\dagger\mb{G}^\dagger\right)^{-1}.
\end{align}
Therefore, the Cauchy transform of $\mb{B}$ over $\mc{C}$ is related to the $\mc{D}$-valued Cauchy transform of $\mb{L}$ as
\begin{align}
	\mc{G}_\mb{B}(z) = \frac{1}{R}\mathrm{Tr}\left(\left\{\mc{G}_{\mb{L}}^{\mc{D}}(\mb{\Lambda}(z))\right\}^{(1,1)}\right),\label{eqGB_GL}
\end{align}
where $\{\cdot\}^{(1,1)}$ denotes the upper-left $R\times R$ matrix block. 

\subsection{Operator-Valued Free Additive Convolution}

Let us introduce the following notations:
\begin{align}
	\overline{\mb{G}} &= \begin{bmatrix}
		\mb{I}_R & \sqrt{\rho_1}\overline{\mb{G}}_1 & \ldots & \sqrt{\rho_K}\overline{\mb{G}}_K
	\end{bmatrix},\\
	\widetilde{\mb{G}} &= \begin{bmatrix}
		\mb{0}_R & \sqrt{\rho_1}\widetilde{\mb{G}}_1 & \ldots & \sqrt{\rho_K}\widetilde{\mb{G}}_K
	\end{bmatrix},\\
	\overline{\mb{F}} &= \begin{bmatrix}
		\overline{\mb{F}}_0^\mathrm{T} & \ldots & \overline{\mb{F}}_K^\mathrm{T}
	\end{bmatrix}^\mathrm{T},\\
	\widetilde{\mb{F}} &= \begin{bmatrix}
		\widetilde{\mb{F}}_0^\mathrm{T} & \ldots & \widetilde{\mb{F}}_K^\mathrm{T}
	\end{bmatrix}^\mathrm{T}.
\end{align}
The linearization matrix $\mb{L}$ can be further expressed as
\begin{align}
	\mb{L} = \overline{\mb{L}} + \widetilde{\mb{L}},
\end{align}
where $\overline{\mb{L}}$ and $\widetilde{\mb{L}}$ contain the deterministic and the random parts of $\mb{L}$, respectively, and are given as follows:
\begin{align}
	\overline{\mb{L}} = \left[\begin{array}{c:c:c:c}
		&  &  & \overline{\mb{G}} \\\hdashline
		&  & \overline{\mb{F}} & -\mb{I}_{L}\\\hdashline
		& \overline{\mb{F}}^\dagger & -\mb{I}_{T} & \\\hdashline
		\overline{\mb{G}}^\dagger & -\mb{I}_{L} &  & 
	\end{array}\right],\label{eqBL_bar}
\end{align}
\begin{align}
	\widetilde{\mb{L}} = \left[\begin{array}{c:c:c:c}
		&  &  & \widetilde{\mb{G}} \\\hdashline
		&  & \widetilde{\mb{F}} & \\\hdashline
		& \widetilde{\mb{F}}^\dagger &  & \\\hdashline
		\widetilde{\mb{G}}^\dagger &  &  & 
	\end{array}\right],
\end{align}
where we omit the all-zero matrix blocks.

The advantage of working with $\mb{L}$ as well as its $\mc{D}$-valued Cauchy transform $\mc{G}_{\mb{L}}^{\mc{D}}$ is that the elements of $\widetilde{\mb{L}}$ are monomials of $\widetilde{\mb{G}}$, $\widetilde{\mb{G}}^\dagger$, $\widetilde{\mb{F}}$, and $\widetilde{\mb{F}}^\dagger$, which are decoupled from each other. This is in contrast to the Cauchy transform of $\mb{B}$ over $\mc{C}$, where the random variables are mixed together. Then, following similar steps as in~\cite{LuTIT2016}, $\widetilde{\mb{L}}$ is shown to be an operator-valued semicircular variable and is free from the deterministic matrix $\overline{\mb{L}}$ over $\mc{D}$. Thus, the limiting spectral distribution of $\mb{L}$ can be determined by the operator-valued free additive convolution of $\overline{\mb{L}}$ and $\widetilde{\mb{L}}$, over the sub-algebra $\mc{D}$, which are summarized in the following propositions.

\begin{proposition}\label{prop_freeness}
	The random variable $\widetilde{\mb{L}}$ is semicircular and is free from $\overline{\mb{L}}$ over $\mc{D}$.
\end{proposition}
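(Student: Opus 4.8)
The plan is to establish the two assertions separately: first that $\widetilde{\mb{L}}$ is an operator-valued semicircular element over $\mc{D}$, and second that it is free from the deterministic matrix $\overline{\mb{L}}$ over $\mc{D}$. The starting observation is that every nonzero block of $\widetilde{\mb{L}}$ is built from the entries of $\widetilde{\mb{F}}_k = \mb{U}_k(\mb{M}_k\odot\mb{X}_k)\mb{V}_k^\dagger$ and $\widetilde{\mb{G}}_k = \frac{1}{\sqrt{r_k}}\mb{W}_k(\mb{N}_k\odot\mb{Y}_k)\mb{S}_k^\dagger$, where $\mb{X}_k$ and $\mb{Y}_k$ have \emph{i.i.d.} zero-mean complex Gaussian entries. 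Consequently $\widetilde{\mb{L}} = \widetilde{\mb{L}}^\dagger$ is a self-adjoint Gaussian random matrix whose joint covariance is completely specified by the variance profiles $\mb{M}_k,\mb{N}_k$ together with the deterministic unitaries $\mb{U}_k,\mb{V}_k,\mb{W}_k,\mb{S}_k$.

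For the semicircularity, I would invoke the moment--cumulant machinery of operator-valued free probability. Because the matrix entries are jointly Gaussian, their classical joint cumulants of order greater than two vanish, so every $\mc{D}$-valued mixed moment $\mbb{E}_\mc{D}[\widetilde{\mb{L}}\mb{K}_1\widetilde{\mb{L}}\cdots\mb{K}_{m-1}\widetilde{\mb{L}}]$ reduces, via Wick's theorem, to a sum over pair partitions of the index set. A genus-type (topological) expansion then shows that, as the channel dimensions $R,T,\{L_k\}$ grow proportionally to infinity, only the non-crossing pairings survive at leading order while the crossing pairings are suppressed. By the combinatorial characterization of operator-valued semicircularity, this establishes that $\widetilde{\mb{L}}$ is semicircular over $\mc{D}$, with its distribution entirely determined by the covariance map $\eta_\mc{D}(\mb{K}) = \tau_\mc{D}(\widetilde{\mb{L}}\mb{K}\widetilde{\mb{L}})$; evaluating this map on the block-diagonal argument $\mb{K}$ of~(\ref{eqK}) reproduces exactly the one-sided correlation functions $\eta_k,\widetilde{\eta}_k,\zeta_k,\widetilde{\zeta}_k$ introduced in the channel model.

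For freeness, I would appeal to the operator-valued analogue of Voiculescu's asymptotic freeness theorem, which states that a self-adjoint Gaussian random matrix is asymptotically free, over the block-diagonal subalgebra, from any deterministic matrix. Since $\overline{\mb{L}}$ is deterministic and $\widetilde{\mb{L}}$ is the Gaussian semicircular element just characterized, this yields their freeness over $\mc{D}$ in the sense of Definition~\ref{defConvention}, namely that every $\mbb{E}_\mc{D}$-centered alternating product of $\overline{\mb{L}}$ and $\widetilde{\mb{L}}$ vanishes asymptotically under the conditional expectation. I would organize this verification to mirror the derivation in~\cite{LuTIT2016}, substituting the present block structure of $\overline{\mb{L}}$ and $\widetilde{\mb{L}}$.

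The step I expect to be the main obstacle is the rigorous control of the operator-valued covariance and the suppression of crossing pairings over the subalgebra $\mc{D}$ rather than over $\mbb{C}$: the non-separable Weichselberger profiles $\mb{M}_k\odot(\cdot)$ and $\mb{N}_k\odot(\cdot)$ couple the row and column spaces through $\mb{U}_k,\mb{V}_k,\mb{W}_k,\mb{S}_k$, so I must track carefully how the block-diagonal compression $\mbb{E}_\mc{D}$ interacts with these profiles in order to confirm that the limiting distribution is genuinely $\mc{D}$-valued semicircular and that no residual correlation survives between $\widetilde{\mb{L}}$ and $\overline{\mb{L}}$.
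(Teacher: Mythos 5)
You are right that the statement splits into semicircularity plus freeness, and your identification of the covariance map with the correlation functions $\eta_k,\widetilde{\eta}_k,\zeta_k,\widetilde{\zeta}_k$ is exactly what the paper computes in (\ref{eqR}). However, your route differs from the paper's, and one of your two pillars has a genuine hole. The paper does not run a monolithic Wick/genus expansion on $\widetilde{\mb{L}}$. Instead it first decomposes $\widetilde{\mb{L}} = \sum_{k=0}^{K}\widetilde{\mb{L}}_k^{(F)} + \sum_{k=1}^{K}\widetilde{\mb{L}}_k^{(G)}$ into $2K+1$ mutually independent summands, as in (\ref{eqLkF})--(\ref{eqLkG}), and writes each summand as a deterministic conjugation $\bm{\mc{A}}_k\widetilde{\bm{\mc{X}}}_k\bm{\mc{A}}_k^{\dagger}$ of a self-adjoint block matrix whose only random content is the raw variance-profiled Gaussian $\mb{M}_k\odot\mb{X}_k$ (resp.\ $\frac{1}{\sqrt{r_k}}\mb{N}_k\odot\mb{Y}_k$). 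This absorbs the Weichselberger unitaries $\mb{U}_k,\mb{V}_k,\mb{W}_k,\mb{S}_k$ into the conjugating matrices, so that only Wigner-type matrices with variance profiles need to be analyzed (for which semicircularity and freeness over the diagonal algebra are known), and then the arguments of \cite[Appendix B]{LuTIT2016} lift these to $\mc{D}$; the conclusion follows because a sum of free $\mc{D}$-semicircular elements is again $\mc{D}$-semicircular and is free from deterministic matrices. Your direct Wick-calculus treatment of the whole Gaussian family is a viable alternative for the semicircularity half—it is essentially the machinery underneath the results the paper cites—but it forces you to handle all cross-terms and unitary couplings at once, which the decomposition avoids.

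The genuine gap is in your freeness step. The theorem you invoke—"a self-adjoint Gaussian random matrix is asymptotically free, over the block-diagonal subalgebra, from any deterministic matrix"—is false in that generality. Take $\widetilde{\mb{L}} = \mathrm{blkdiag}(\mb{W},\mb{W})$ with the \emph{same} normalized Wigner block $\mb{W}$ repeated, $\mc{D}$ the block-diagonal subalgebra, and $\mb{P}$ the deterministic permutation swapping the two blocks. Both elements are $\mc{D}$-centered, yet
\begin{align}
\mbb{E}_{\mc{D}}\left[\widetilde{\mb{L}}\mb{P}\widetilde{\mb{L}}\mb{P}\right] = \mathrm{blkdiag}\left(\mbb{E}[\mb{W}^2],\,\mbb{E}[\mb{W}^2]\right)\neq 0,
\end{align}
which violates the vanishing condition of Definition~\ref{defConvention}, even though this $\widetilde{\mb{L}}$ \emph{is} $\mc{D}$-semicircular. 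So Gaussianity plus $\mc{D}$-semicircularity alone do not buy freeness from deterministic matrices: what is needed, and what your sketch must verify rather than presuppose, is that the covariance map lands in $\mc{D}$ when applied to \emph{arbitrary} elements of $\mc{M}$, i.e.\ $\mbb{E}[\widetilde{\mb{L}}\mb{K}\widetilde{\mb{L}}]\in\mc{D}$ for all $\mb{K}\in\mc{M}$, not merely for $\mb{K}\in\mc{D}$. In the present model this holds for two structural reasons that the paper's decomposition makes manifest: circular symmetry of the complex Gaussian entries (only a block paired with its own adjoint has nonvanishing covariance, which kills all off-diagonal output blocks), and independence of the component channels across $k$ (which makes the surviving diagonal output blocks block-diagonal with respect to the $L_0,\ldots,L_K$ partition). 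In my counterexample this condition fails, since $\mbb{E}[\widetilde{\mb{L}}\mb{P}\widetilde{\mb{L}}]\notin\mc{D}$. Once you state and check this $\mc{D}$-valuedness of the covariance on all of $\mc{M}$, the Shlyakhtenko/Speicher--Vargas-type argument (or the verification mirroring \cite{LuTIT2016}) does deliver both semicircularity and asymptotic freeness from $\overline{\mb{L}}$; without it, the appeal to an operator-valued Voiculescu theorem has no force.
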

\begin{proof}
The proof of Proposition~\ref{prop_freeness} is given in Appendix~\ref{appx_freeness}.
\end{proof}

Due to Proposition~\ref{prop_freeness}, the operator-valued Cauchy transform of $\mb{L}$ in (\ref{eqGB_GL}) can be calculated as the free additive convolution between $\widetilde{\mb{L}}$ and $\overline{\mb{L}}$, by using a subordination formula~\cite{Speicher2013} as follows:
\begin{align}
	\mc{G}_{\mb{L}}^{\mc{D}}(\mb{\Lambda}(z)) &= \mc{G}_{\overline{\mb{L}}}^{\mc{D}}\left(\mb{\Lambda}(z) - \mc{R}_{\widetilde{\mb{L}}}^{\mc{D}}\left(\mc{G}_{\mb{L}}^{\mc{D}}(\mb{\Lambda}(z))\right)\right)\nonumber\\
	&= \mbb{E}_{\mc{D}}\left[\left(\mb{\Lambda}(z) - \mc{R}_{\widetilde{\mb{L}}}^{\mc{D}}\left(\mc{G}_{\mb{L}}^{\mc{D}}(\mb{\Lambda}(z))\right) - \overline{\mb{L}}\right)^{-1}\right],\label{eqGL_sub}
\end{align}
where $\mc{R}_{\widetilde{\mb{L}}}^{\mc{D}}\left(\cdot\right)$ denotes the operator-valued $R$-transform of $\mb{L}$ over $\mc{D}$. Then, $\mc{G}_\mb{B}(z)$ can be determined by the following proposition.

\begin{proposition}\label{prop_cauchyB}
	The Cauchy transform of $\mb{B}$, with $z\in\mbb{C}^+$, is given by
	\begin{align}\label{eqGB}
		\mc{G}_{\mb{B}}(z) = \frac{1}{R}\mathrm{Tr}\left[\left(\widetilde{\mb{\Psi}}(z) - \overline{\mb{G}} \mb{\Xi}(z)^{-1} \overline{\mb{G}}^\dagger\right)^{-1}\right],
	\end{align}
	where
	\begin{align}
		\mb{\Xi}(z) &= \mb{\Psi}(z) - \left(\widetilde{\mb{\Phi}}(z) - \overline{\mb{F}}\mb{\Phi}(z)^{-1}\overline{\mb{F}}^\dagger\right)^{-1}.
	\end{align}
The matrix-valued function $\widetilde{\mb{\Psi}}(z)$, $\mb{\Psi}(z)$, $\widetilde{\mb{\Phi}}(z)$, $\mb{\Phi}(z)$ satisfy the following fixed-point equations
\begin{align}
	\widetilde{\mb{\Psi}}(z) &= z\mb{I}_R - \sum_{k=1}^K \widetilde{\eta}_k(\mc{G}_{{\mb{C}}_k}(z)),\label{eqPsit}\\
	\mb{\Psi}(z) &= \mathrm{blkdiag}\left\{ \mb{0}_R,\  -\eta_1(\mc{G}_{\widetilde{\mb{C}}}(z)),\  \ldots,\  -\eta_K(\mc{G}_{\widetilde{\mb{C}}}(z))\right\},\label{eqPsi}\\
	\widetilde{\mb{\Phi}}(z) &= \mathrm{blkdiag}\left\{ - \widetilde{\zeta}_0(\mc{G}_{\widetilde{\mb{D}}}(z)),\ - \widetilde{\zeta}_1(\mc{G}_{\widetilde{\mb{D}}}(z)),\ \ldots,\   - \widetilde{\zeta}_K(\mc{G}_{\widetilde{\mb{D}}}(z)) \right\},\label{eqPhit}\\
	\mb{\Phi}(z) &= \mb{I}_T-\sum_{k=0}^K \zeta_k(\mc{G}_{\mb{D}_k}(z)),\label{eqPhi}
\end{align}
where $\mathrm{blkdiag}\left\{\mb{A}_1,\ldots,\mb{A}_n\right\}$ constructs a block diagonal matrix with square matrices $\mb{A}_1,\ldots,\mb{A}_n$ being the diagonal blocks, and $\mc{G}_{\widetilde{\mb{C}}}(z)$, $\mc{G}_{\mb{C}_k}(z)$, $\mc{G}_{\widetilde{\mb{D}}}(z)$, $\mc{G}_{\mb{D}_k}(z)$ are given by
\begin{align}
	\mc{G}_{\widetilde{\mb{C}}}(z) &= \left(\widetilde{\mb{\Psi}}(z) - \overline{\mb{G}} \mb{\Xi}(z)^{-1}\overline{\mb{G}}^\dagger \right)^{-1},\label{eqGCt}\\
	\mc{G}_{\mb{C}_k}(z) &= \left\{\left(\mb{\Psi}(z) - \overline{\mb{G}}^\dagger \widetilde{\mb{\Psi}}(z)^{-1}\overline{\mb{G}} - \left( \widetilde{\mb{\Phi}}(z) - \overline{\mb{F}}\mb{\Phi}(z)^{-1}\overline{\mb{F}}^\dagger \right)^{-1} \right)^{-1}\right\}_{k+1}, \quad 1\le k\le K,\label{eqGCk}\\
	\mc{G}_{\widetilde{\mb{D}}}(z) &= \left(\mb{\Phi}(z) - \overline{\mb{F}}^\dagger \left(\widetilde{\mb{\Phi}}(z) - \left(\mb{\Psi}(z) - \overline{\mb{G}}^\dagger\widetilde{\mb{\Psi}}(z)^{-1}\overline{\mb{G}} \right)^{-1}\right)^{-1} \overline{\mb{F}} \right)^{-1},\label{eqGDt}\\
	\mc{G}_{\mb{D}_k}(z) &= \left\{ \left(\widetilde{\mb{\Phi}}(z) - \overline{\mb{F}}\mb{\Phi}(z)^{-1}\overline{\mb{F}}^\dagger - \left(\mb{\Psi}(z) - \overline{\mb{G}}^\dagger \widetilde{\mb{\Psi}}(z)^{-1}\overline{\mb{G}}\right)^{-1}\right)^{-1} \right\}_{k+1}, \quad 0\le k\le K.\label{eqGDk}
\end{align}
The notation $\{\mb{A}\}_{k+1}$ with $n\times n$ matrix $\mb{A}$ denotes the $(k+1)$-th diagonal matrix block containing entries from $\sum_{i=0}^{k-1}L_i+1$ to $\sum_{i=0}^{k} L_i$ rows and columns of $\mb{A}$.
\end{proposition}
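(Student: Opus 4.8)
The plan is to evaluate the subordination identity~(\ref{eqGL_sub}), whose validity rests on Proposition~\ref{prop_freeness}: because $\widetilde{\mb{L}}$ is semicircular and free from $\overline{\mb{L}}$ over $\mc{D}$, its $\mc{D}$-valued $R$-transform reduces to the covariance map, $\mc{R}_{\widetilde{\mb{L}}}^{\mc{D}}(\mb{K}) = \mbb{E}_{\mc{D}}[\widetilde{\mb{L}}\mb{K}\widetilde{\mb{L}}]$ for every $\mb{K}\in\mc{D}$. First I would compute this covariance explicitly. Writing $\mb{K} = \mathrm{blkdiag}(\widetilde{\mb{C}},\mb{D},\widetilde{\mb{D}},\mb{C})$ as in~(\ref{eqK}) and multiplying out the block form of $\widetilde{\mb{L}}$, the product $\widetilde{\mb{L}}\mb{K}\widetilde{\mb{L}}$ is block diagonal in the four index groups, with diagonal blocks $\widetilde{\mb{G}}\mb{C}\widetilde{\mb{G}}^\dagger$, $\widetilde{\mb{F}}\widetilde{\mb{D}}\widetilde{\mb{F}}^\dagger$, $\widetilde{\mb{F}}^\dagger\mb{D}\widetilde{\mb{F}}$, and $\widetilde{\mb{G}}^\dagger\widetilde{\mb{C}}\widetilde{\mb{G}}$. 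Applying $\mbb{E}_{\mc{D}}$ then takes expectations and, through the independence across the $K+1$ links and the one-sided correlation functions $\widetilde{\eta}_k$, $\widetilde{\zeta}_k$, $\zeta_k$, $\eta_k$ of Section~\ref{secModel}, collapses these blocks into the sums or block-diagonal collections appearing in~(\ref{eqPsit})--(\ref{eqPhi}).

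Second, I would insert $\mb{K} = \mc{G}_{\mb{L}}^{\mc{D}}(\mb{\Lambda}(z))$, which lies in $\mc{D}$ and whose diagonal blocks I name $\mc{G}_{\widetilde{\mb{C}}}$, $\mc{G}_{\mb{D}}$, $\mc{G}_{\widetilde{\mb{D}}}$, $\mc{G}_{\mb{C}}$, and assemble $\mb{\Omega}(z) = \mb{\Lambda}(z) - \mc{R}_{\widetilde{\mb{L}}}^{\mc{D}}(\mc{G}_{\mb{L}}^{\mc{D}}) - \overline{\mb{L}}$. Combining the $z\mb{I}_R$ of $\mb{\Lambda}(z)$, the negated covariance blocks, and the $-\mb{I}_T$ entry of $\overline{\mb{L}}$ in~(\ref{eqBL_bar}), the four diagonal blocks of $\mb{\Omega}(z)$ become exactly $\widetilde{\mb{\Psi}}(z)$, $\widetilde{\mb{\Phi}}(z)$, $\mb{\Phi}(z)$, $\mb{\Psi}(z)$, while its off-diagonal blocks are inherited directly from $-\overline{\mb{L}}$, namely the $\pm\overline{\mb{G}}$, $\pm\overline{\mb{F}}$, and the two $\pm\mb{I}_L$ coupling blocks.

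Third, since $\mc{G}_{\mb{L}}^{\mc{D}}(\mb{\Lambda}(z)) = \mbb{E}_{\mc{D}}[\mb{\Omega}(z)^{-1}]$ with $\mb{\Omega}(z)$ deterministic, I would invert the $4\times 4$ block matrix $\mb{\Omega}(z)$ by iterated Schur complements using Lemma~\ref{lemmaBlockInv2}. Eliminating the $\widetilde{\mb{D}}$- and $\mb{D}$-indexed blocks produces the intermediate quantity $\mb{\Xi}(z) = \mb{\Psi}(z) - (\widetilde{\mb{\Phi}}(z) - \overline{\mb{F}}\mb{\Phi}(z)^{-1}\overline{\mb{F}}^\dagger)^{-1}$, and a final Schur complement eliminating the $\mb{C}$-block gives the $(1,1)$ entry $\mc{G}_{\widetilde{\mb{C}}}(z) = (\widetilde{\mb{\Psi}}(z) - \overline{\mb{G}}\mb{\Xi}(z)^{-1}\overline{\mb{G}}^\dagger)^{-1}$ of~(\ref{eqGCt}); the remaining diagonal blocks~(\ref{eqGCk})--(\ref{eqGDk}) follow from the analogous Schur complements for the other index groups. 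Because these blocks re-enter $\widetilde{\mb{\Psi}}$, $\mb{\Psi}$, $\widetilde{\mb{\Phi}}$, $\mb{\Phi}$ through the correlation functions, the system closes into the stated coupled fixed-point equations, and~(\ref{eqGB_GL}) finally yields $\mc{G}_{\mb{B}}(z) = \frac{1}{R}\mathrm{Tr}(\mc{G}_{\widetilde{\mb{C}}}(z))$, which is~(\ref{eqGB}).

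The hard part will be the block-matrix inversion together with its bookkeeping. The matrix $\mb{\Omega}(z)$ couples all four index groups in a chain---$\widetilde{\mb{C}}$ to $\mb{C}$, $\mb{C}$ to $\mb{D}$, and $\mb{D}$ to $\widetilde{\mb{D}}$---so the Schur complements must be carried out in the correct order to reproduce the exact nesting of inverses in~(\ref{eqGCt})--(\ref{eqGDk}). In addition, one must verify that the blocks extracted from $\mb{\Omega}(z)^{-1}$ coincide with the very blocks $\mc{G}_{\widetilde{\mb{C}}}$, $\mc{G}_{\mb{C}_k}$, $\mc{G}_{\widetilde{\mb{D}}}$, $\mc{G}_{\mb{D}_k}$ that were fed into the $R$-transform, so that the fixed-point system is genuinely self-consistent rather than circular. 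A secondary technical point is the reduction of the semicircular covariance to the per-link correlation functions, where independence across the links is what makes the covariance map block diagonal and lets $\widetilde{\eta}_k$, $\eta_k$, $\widetilde{\zeta}_k$, $\zeta_k$ appear exactly as written in~(\ref{eqPsit})--(\ref{eqPhi}).
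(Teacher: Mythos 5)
Your proposal is correct and follows essentially the same route as the paper's proof in Appendix~\ref{appx_cauchyB}: reduce the semicircular $R$-transform to the covariance map $\mbb{E}_{\mc{D}}[\widetilde{\mb{L}}\mb{K}\widetilde{\mb{L}}]$, substitute into the subordination formula~(\ref{eqGL_sub}) to obtain the deterministic $4\times 4$ block matrix with diagonal blocks $\widetilde{\mb{\Psi}},\widetilde{\mb{\Phi}},\mb{\Phi},\mb{\Psi}$, and extract the diagonal blocks of its inverse by Schur complements to close the fixed-point system and read off~(\ref{eqGB}) from the $(1,1)$ block. The only cosmetic difference is that the paper packages the elimination via the $3\times 3$ block-inversion identity (Lemma~\ref{lemmaBlockInv3}) together with the Woodbury identity, whereas you iterate the $2\times 2$ identity (Lemma~\ref{lemmaBlockInv2}) directly---an equivalent computation, since Lemma~\ref{lemmaBlockInv3} is itself proved that way.
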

\begin{proof}
	The proof of Proposition~\ref{prop_cauchyB} is given in Appendix~\ref{appx_cauchyB}.
\end{proof}

As indicated by Proposition~\ref{prop_cauchyB}, the Cauchy transform $\mc{G}_{\mb{B}}(z)$ as well as the matrix-valued functions $\widetilde{\mb{\Psi}}(z)$, $\mb{\Psi}(z)$, $\widetilde{\mb{\Phi}}(z)$, $\mb{\Phi}(z)$ can be determined by solving the fixed-point equations. The numerical value of $\mc{G}_{\mb{B}}(z)$ can be obtained by iterating the set of equations (\ref{eqPsit})-(\ref{eqPhi}) and (\ref{eqGCt})-(\ref{eqGDk}).

\section{Numerical Results}\label{secResult}

In this section, numerical simulations are conducted to study the spectral distribution of the RIS-assisted MIMO channel as well as its mutual information. In particular, we examine the impacts of the number of RIS panels, the number of antennas at the transceivers, and the Rician factor of the propagation channels on the mutual information. The mutual information $I(\gamma)$ and the eigenvalue PDF $f_{\mb{B}}(t)$ are calculated by (\ref{eqI_B}) and (\ref{eqfB_invCauchy}), respectively, where the involved Cauchy transform $\mc{G}_{\mb{B}}(z)$ is given in Proposition~\ref{prop_cauchyB}. In each simulation case, the MIMO system without RIS deployment is included for comparison, i.e., $K = 0$, where the eigenvalue PDF and the Cauchy transform can be calculated by using existing result from~\cite[Thm. 2]{LuTIT2016}. Each simulation curve is obtained by averaging over $10^6$ independent channel realizations.

In the simulations, the antenna elements of the transceivers and the reflecting elements of the RIS panels are arranged as the uniform planar arrays (UPAs). Denote  $T = T^{(H)}\times T^{(V)}$, $R = R^{(H)}\times R^{(V)}$, and $L_k = L_k^{(H)}\times L_k^{(V)}$, where the numbers with the superscripts $H$ and $V$ represent the numbers of elements aligned in the horizontal and vertical dimensions, respectively. The specular component of each channel is the line-of-sight propagation component between two uniform planar arrays (UPAs), i.e., 
\begin{align}
	\overline{\mb{F}}_k &= \mb{a}\left(\varphi_k^{(F)},\nu_k^{(F)}, L_k^{(H)}, L_k^{(V)}\right) \mb{a}^\dagger\left(\theta_k^{(F)},\phi_k^{(F)}, T^{(H)}, T^{(V)}\right),\quad 0\le k\le K,\label{eqF_UPA}\\
	\overline{\mb{G}}_k &= \mb{a}\left(\varphi_k^{(G)},\nu_k^{(G)}, R^{(H)}, R^{(V)}\right) \mb{a}^\dagger\left(\theta_k^{(G)},\phi_k^{(G)}, L_k^{(H)}, L_k^{(V)}\right),\quad 1\le k\le K,\label{eqG_UPA}
\end{align}
where $\theta_k^{(i)}$ and $\phi_k^{(i)}$ are the azimuth and elevation angles of the $k$-th departing UPA, while $\varphi_k^{(i)}$ and $\nu_k^{(i)}$ are the azimuth and elevation angles of the $k$-th arriving UPA, $i\in\{F,G\}$. The function $\mb{a}(\cdot)$ denotes the steering vector of an $M\times N$ UPA and is defined as 
\begin{align}
	\mb{a}(\alpha,\beta, M, N) = \left[1,\ldots, e^{i\pi (n\sin(\alpha)\sin(\beta) + m \cos(\beta))},\ldots,e^{i\pi ((N-1)\sin(\alpha)\sin(\beta) + (M-1) \cos(\beta))}\right]^\mathrm{T},
\end{align}
where $0\le m\le M-1$ and $0\le n\le N-1$.

Fig.~\ref{figPDF} shows the empirical and asymptotic eigenvalue PDF of the RIS-assisted MIMO channels $\mb{H}\mb{H}^\dagger$, assuming the number of RIS panels is $K = 0$, $1$, $2$, and $4$, respectively. In all the cases, the numbers of transmit and receive antennas are set to $T = R = 64$, and the number of reflecting elements in each RIS panel is set to 144. The channel statistics, such as $\overline{\mb{F}}_k$, $\mb{U}_k$, $\mb{V}_k$, $\mb{M}_k$ in (\ref{eqFk}), and $\overline{\mb{G}}_k$, $\mb{W}_k$, $\mb{S}_k$, $\mb{N}_k$ in (\ref{eqGk}) are randomly generated but fix for the Monte Carlo simulations. The numerical results show that the asymptotic PDF calculated by (\ref{eqfB_invCauchy}) provides an excellent approximation to the simulated PDF for all the considered parameter configurations. By increasing the number of deployed RIS panels, it is possible to increase the maximum eigenvalue, therefore, improve amplitude of the eigen-channels.

\begin{figure*}[t!]
	\centering
	\subfigure[$K = 0$]{\includegraphics[width=2.8in]{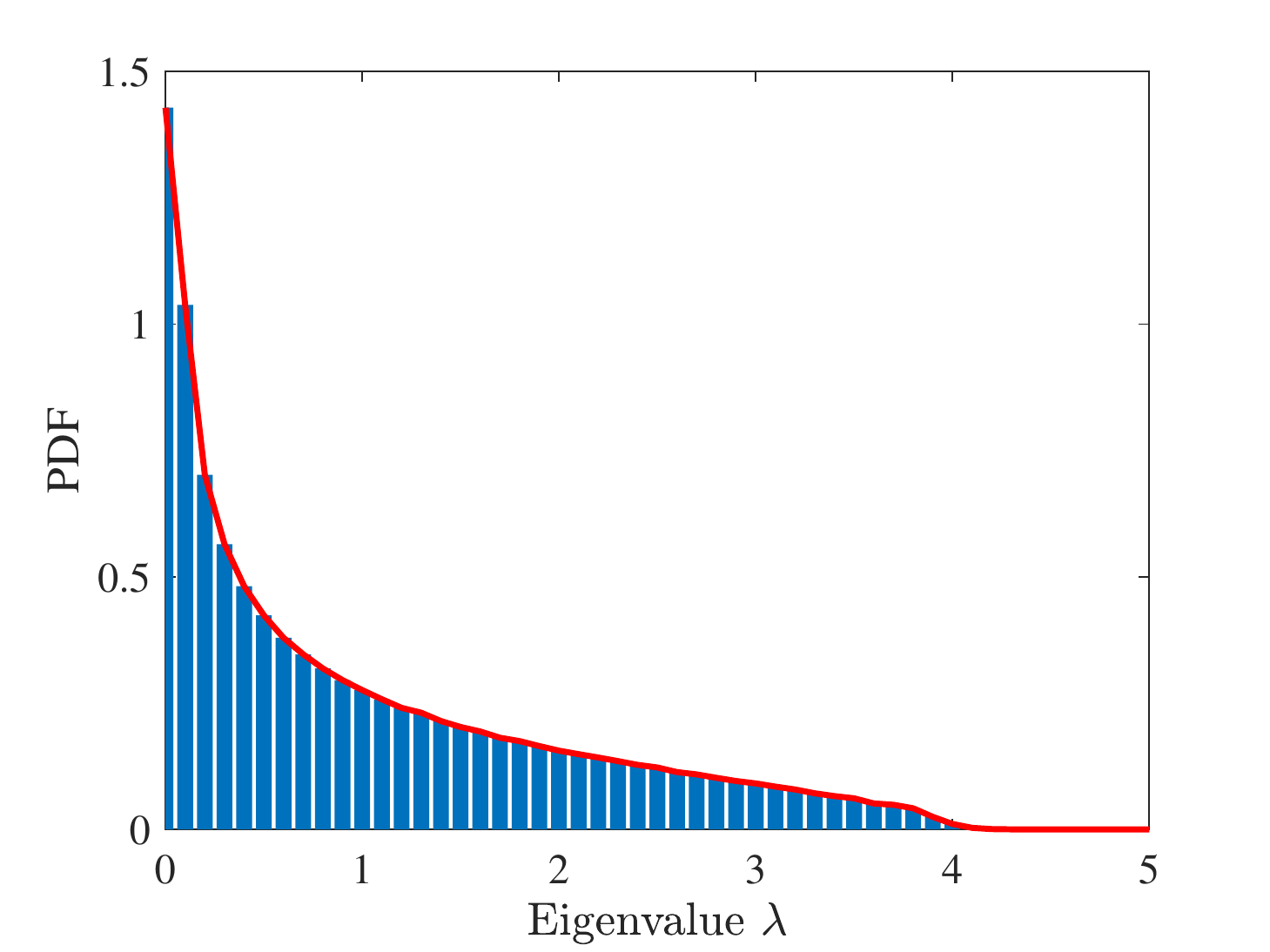}
	}
	\subfigure[$K = 1$]{\includegraphics[width=2.8in]{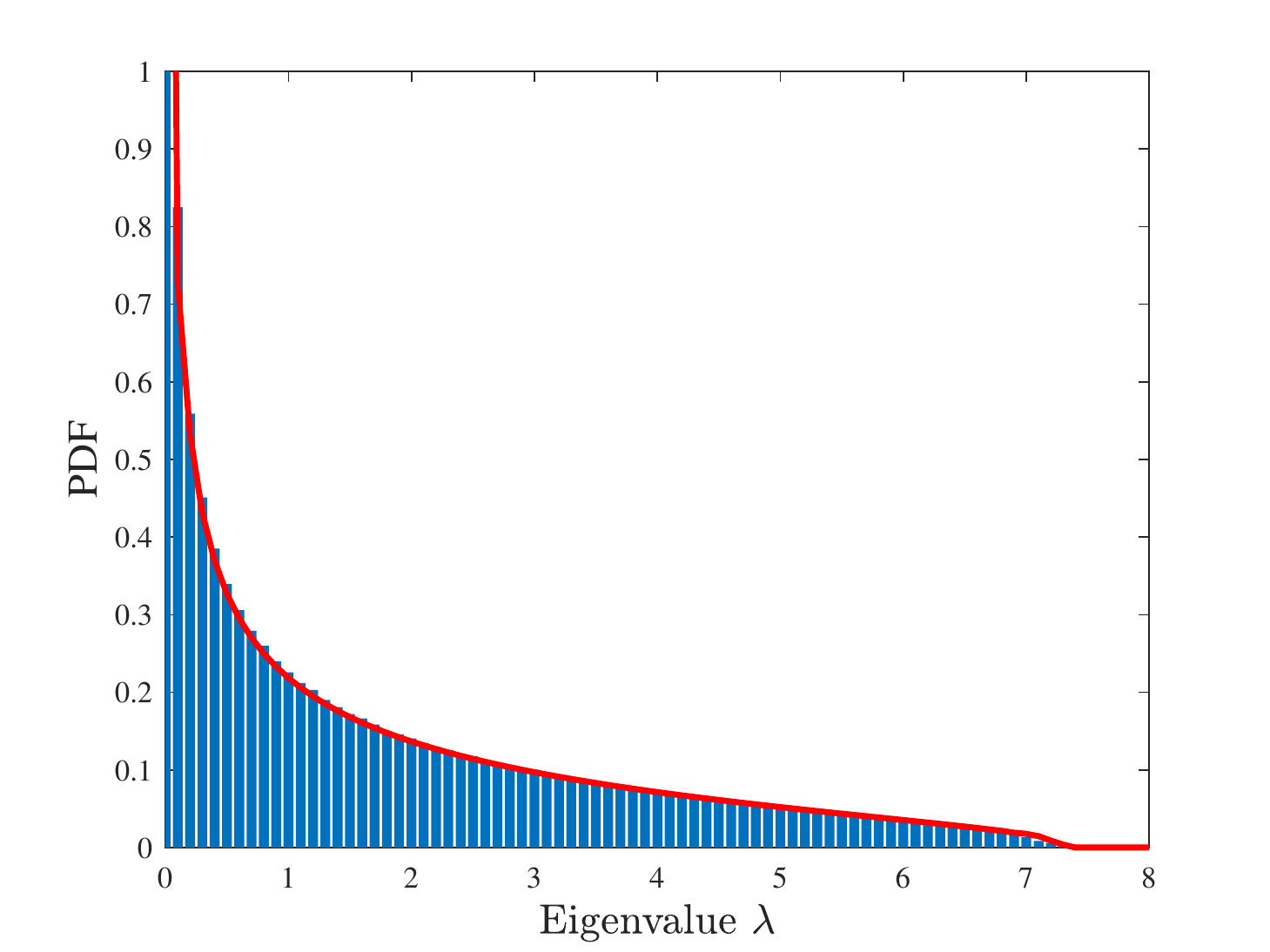}
	}\\
	\subfigure[$K = 2$]{\includegraphics[width=2.8in]{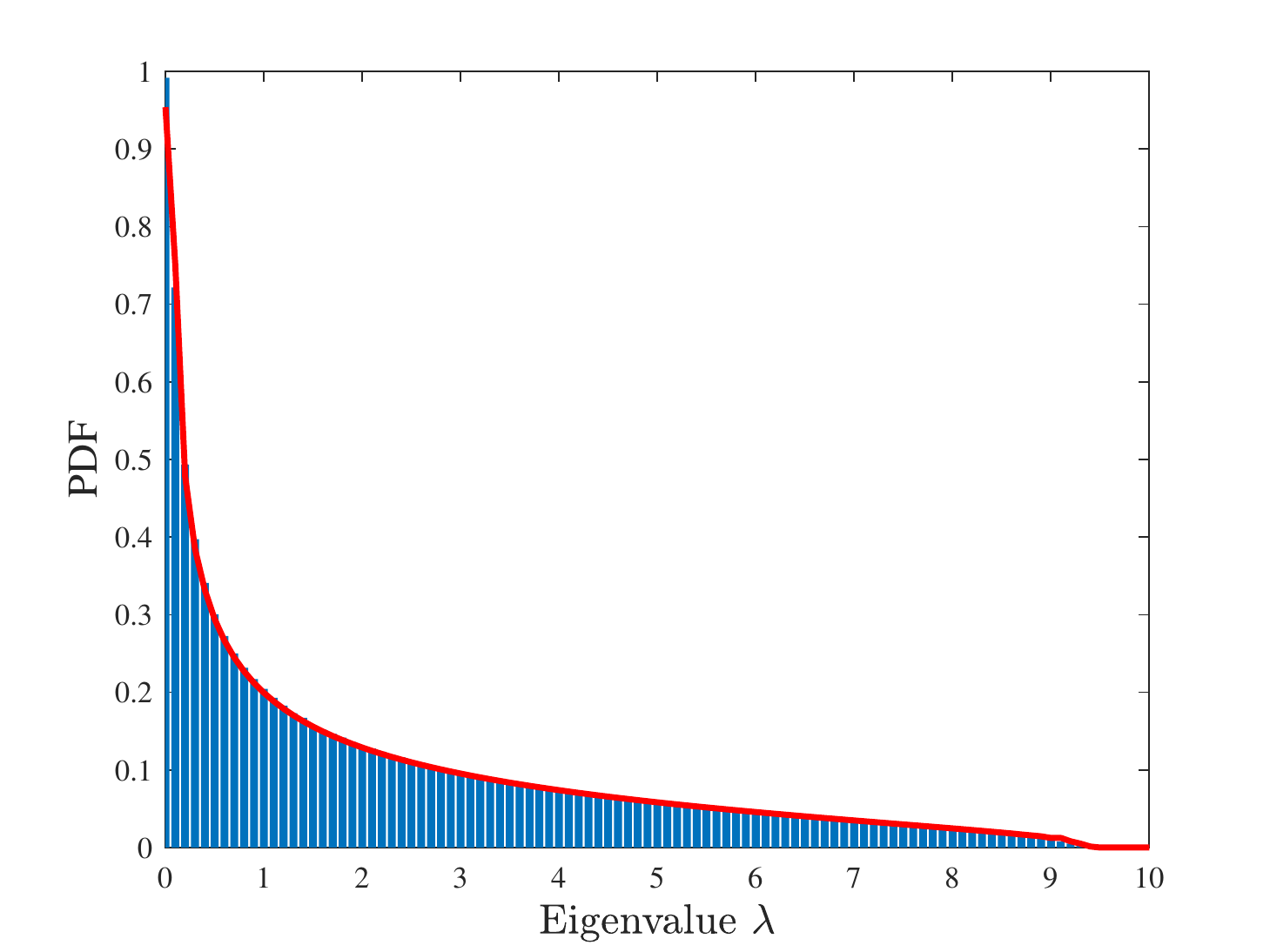}
	}
	\subfigure[$K = 4$]{\includegraphics[width=2.8in]{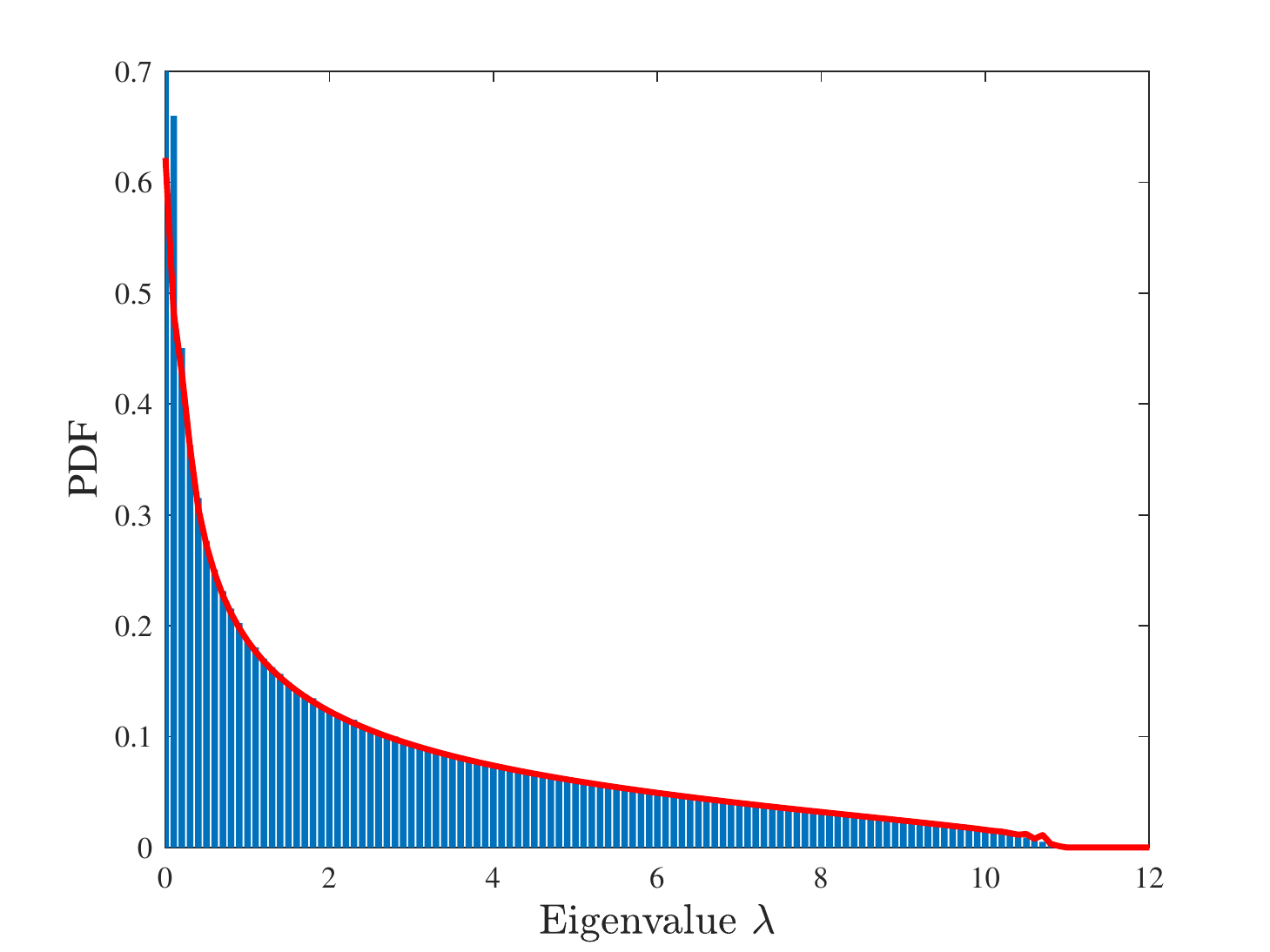}
	}
	\caption{Comparisons of empirical and asymptotic eigenvalue PDFs of the RIS-assisted MIMO channels $\mb{H}\mb{H}^\dagger$ with different numbers of RIS panels. The numbers of transmit and receive antennas are set to $T = R = 64$, and the number of reflecting elements of each RIS panel is set to $144$.}
	\label{figPDF}
\end{figure*}

In Fig.~\ref{figMI_SNR_T}, we investigate the impacts of the SNR, the number of antennas, and the Rician factors on the mutual information of the RIS-assisted MIMO channels. Equal number of antennas is set at the transmitter and the receiver, where $T = R = 4$ in Fig.~\ref{figMI_SNR_T}~(a) and $T = R = 8$ in Fig.~\ref{figMI_SNR_T}~(b), respectively. The MIMO communication is assisted by $K = 6$ RIS panels, and each RIS panel is composed of $16$ reflecting elements. Compared to the direct link $\mb{F}_0$, the relative channel gains $[\rho_1,\ldots,\rho_6]$ in (\ref{eqH}) corresponding to the reflected links are configured as $[0.9, 0.8, 0.7, 0.5, 0.3, 0.1]$. All the Rician factors are set equal as $\kappa = \kappa_{k}^{(F)} = \kappa_{k}^{(G)}$, where $\kappa$ is set to 1, 10, or 100. In presence of non-degenerate random scattering components $\widetilde{\mb{F}}_k$ in (\ref{eqFk}) and $\widetilde{\mb{G}}_k$ in (\ref{eqGk}), the RIS-assisted MIMO channels are full-rank, and the mutual information at large SNR linearly increases as $\min\{T,R\}/10\log_{10}(e)$ nats/s/Hz for every 1 dB SNR improvement, depicted as the dashed lines in Fig.~\ref{figMI_SNR_T}. However, as the Rician factor becomes large, although the mutual information has the same scaling law, it requires larger SNR levels to exhibit the linear improvement. This is illustrated in the insets of Fig.~\ref{figMI_SNR_T}. When the Rician factors are $\kappa = 1, 10$, and 100, the asymptotic mutual information has at least $5\%$ deviation from the high-SNR scaling law at SNRs 20.8 dB, 27.5 dB, and 34.2 dB when $T = R = 4$, and at SNRs 23.7 dB, 29.8 dB, and 36.1 dB when $T = R = 8$, respectively. This is due to the fact that as the Rician factor increases, the random scattering components to maintain the rank of the channel have less contributions to the overall MIMO channels.

\begin{figure*}[t!]
	\centering
	\subfigure[$T = R = 4$]{\includegraphics[width=0.6\columnwidth]{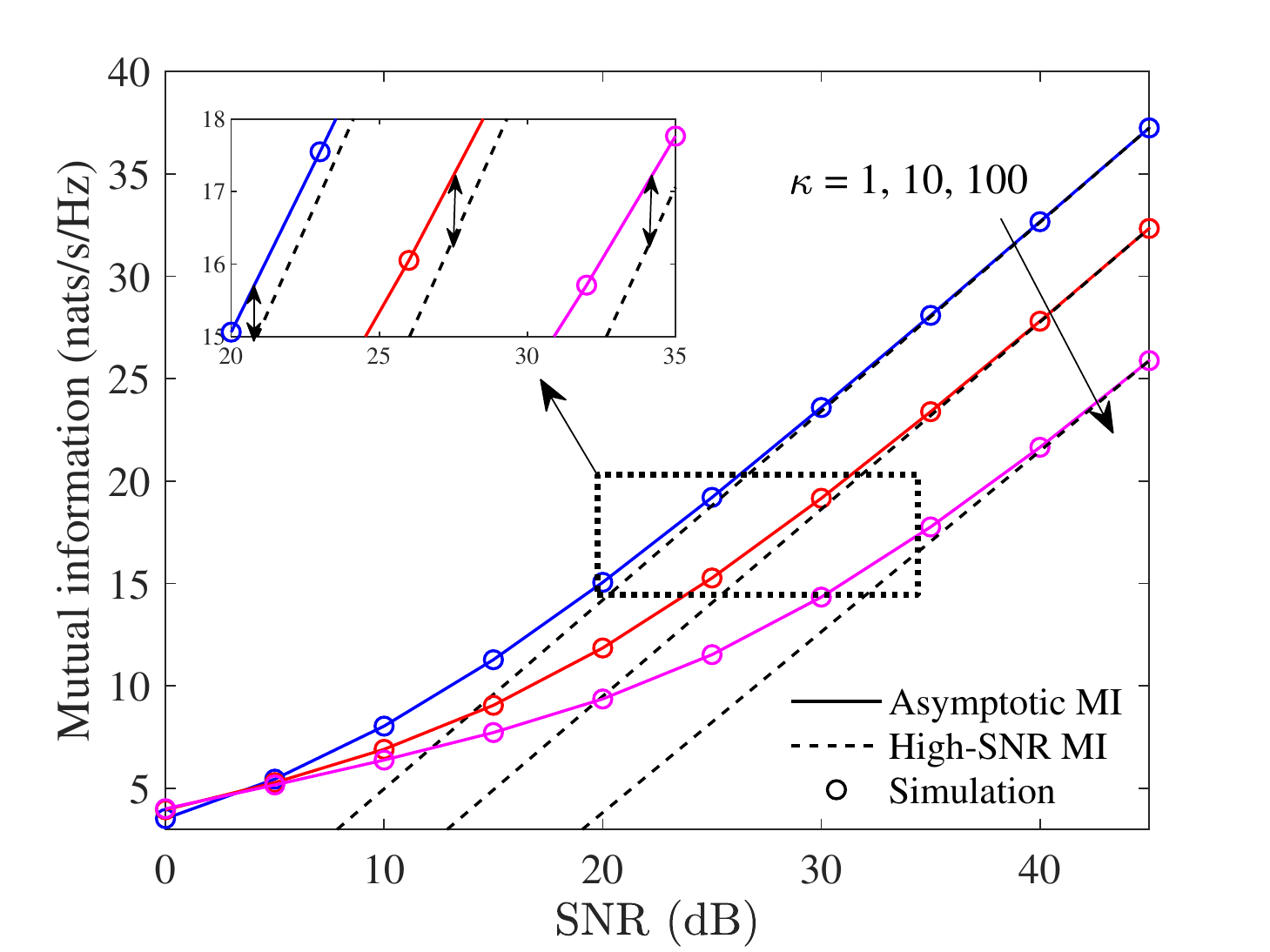}
	}\\
	\subfigure[$T = R = 8$]{\includegraphics[width=0.6\columnwidth]{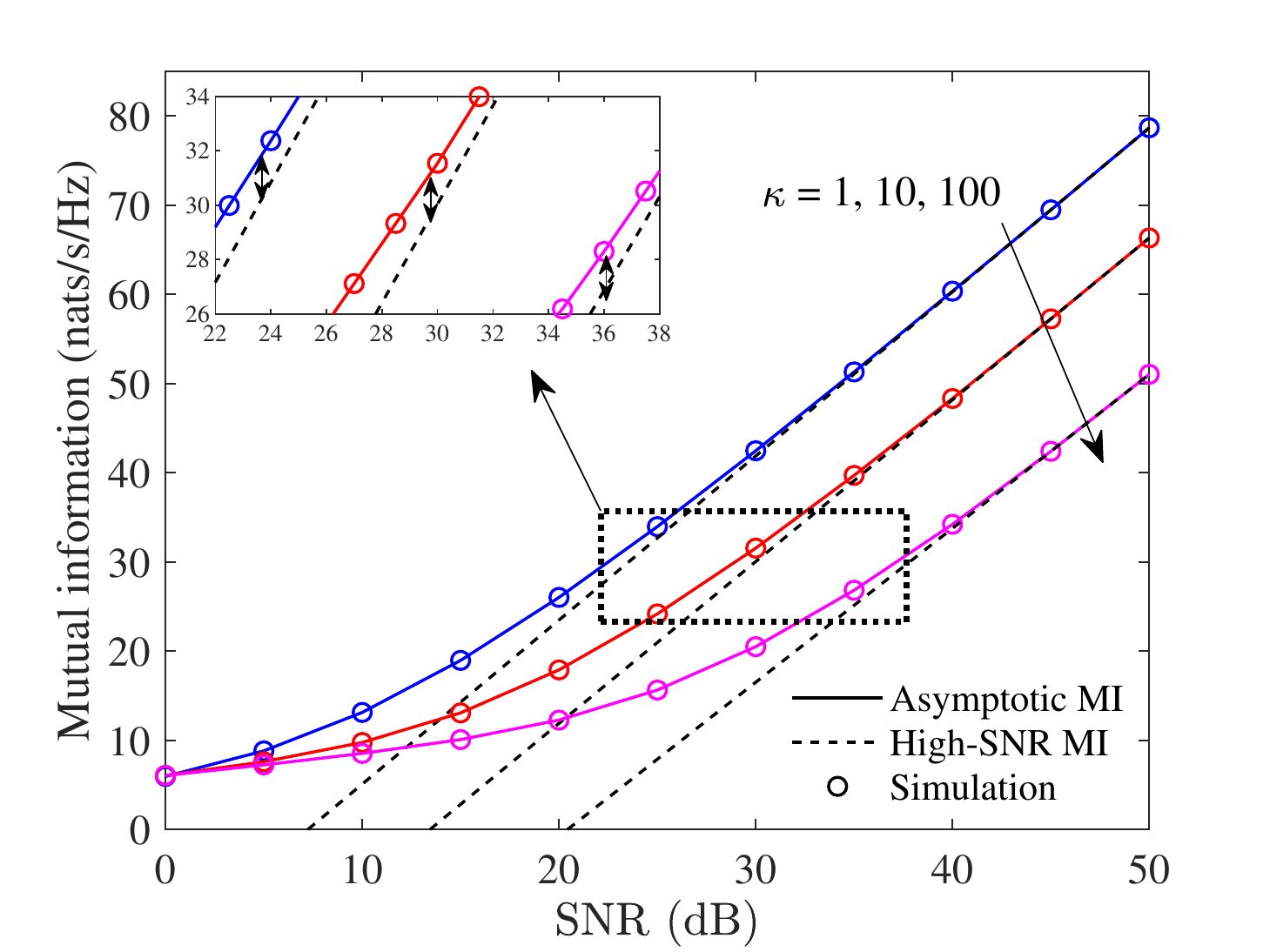}
	}
	\caption{Mutual information of RIS-assisted MIMO channels at varying SNR $\gamma$, when the number of antennas at the transceivers is $T = R = 4$ in (a) and $T = R = 8$ in (b), respectively. In each case, the Rician factor of the component channels is set equal to $\kappa = 1, 10$, or 100. There are $K = 6$ deployed RIS panels, each of which has 16 reflecting elements. Insets show the $5\%$ deviations of the asymptotic mutual information from the high-SNR scaling law.}
	\label{figMI_SNR_T}
\end{figure*}

To further investigate the impacts of the Rician factor on the mutual information of the MIMO channels, we plot Fig.~\ref{figMI_kappa_K} to show the mutual information as a function of $\kappa$, with the numbers of RIS panels $K$ set to 0, 1, 2, and 4, respectively. The number of transmit and receive antennas are set to $T = 16$ and $R = 8$, while the performance of the MIMO system is evaluated at SNR $\gamma = 10$ dB. It is observed that when $\kappa$ is less than 1, the mutual information can be improved as $\kappa$ increases, while it monotonically decreases for $\kappa>1$ in all the considered cases. When the number of RIS panels is larger, the mutual information degradation is less prominent as each RIS provides independent reflected link, which increases the richness of the MIMO channels.

\begin{figure}[t]
	\centerline{\includegraphics[width=0.6\columnwidth]{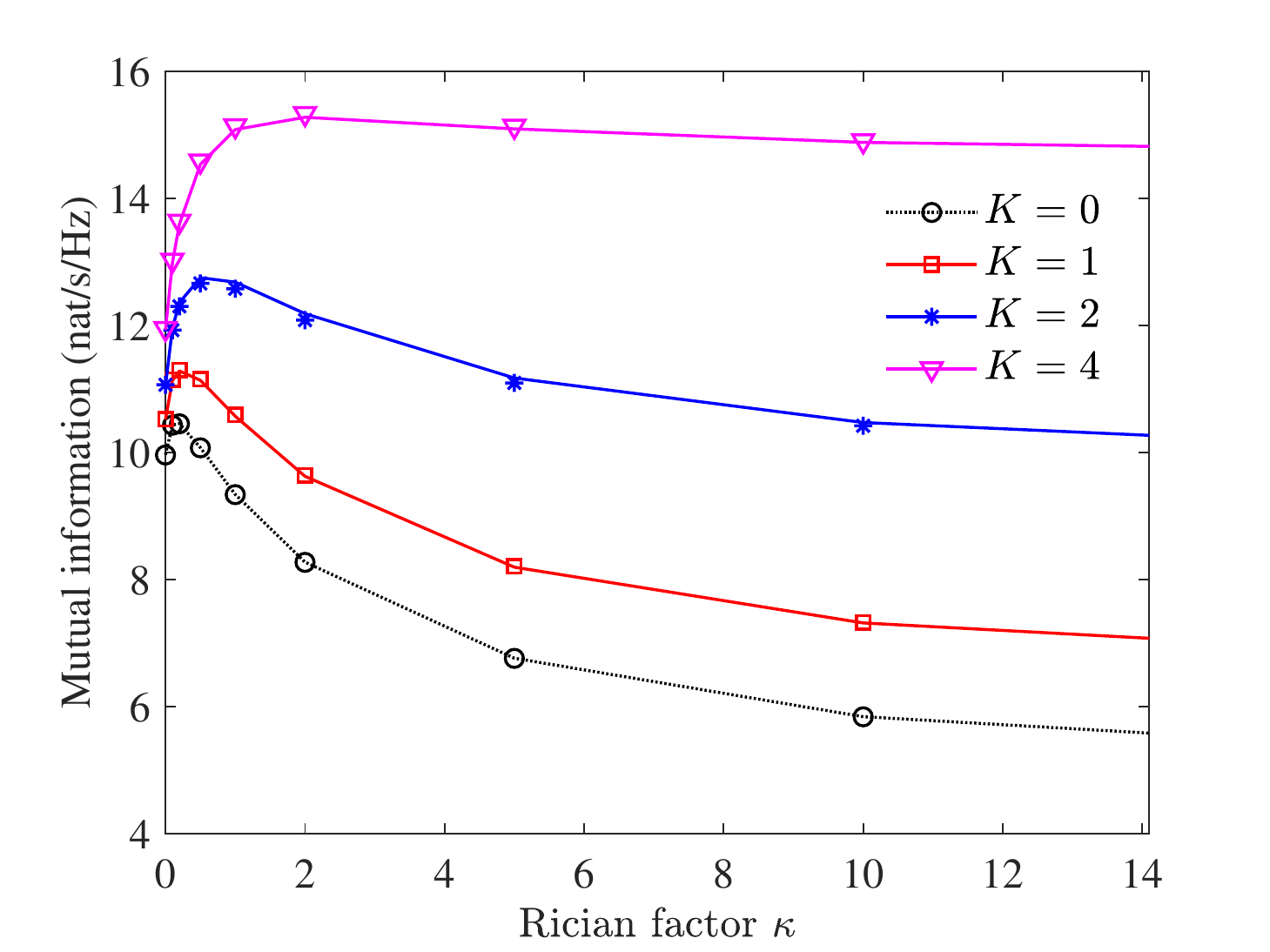}
	}
	\caption{Mutual information of RIS-assisted MIMO channel for varying Rician factor $\kappa$. The number of antennas at the transmitter and the receiver are $T = 16$ and $R = 8$, respectively, and each RIS panel has 8 reflecting elements. The SNR of the end-to-end channel is set to $\gamma = 10$ dB.}
	\label{figMI_kappa_K}
\end{figure}


In Fig.~\ref{figMI_K_T}, the impact of the numbers of RIS panels is investigated in more details, when the mutual information is evaluated for different transmit antennas $T = 8$, 16, 32, and 64. The number of receive antennas is fixed to $R = 10$, and each RIS panel has 8 reflecting elements. In this simulation setting, we consider the urban canyon communication scenario as depicted in Fig.~\ref{figSystem}, where the specular components of $\{\mb{F}_k\}$ channels and of $\{\mb{G}_k\}$ channels have relatively small angular variations. That is, in (\ref{eqF_UPA}) and (\ref{eqG_UPA}), we assume that the departing angles $\left\{\theta_k^{(F)}, \phi_k^{(F)}\right\}_{0\le k\le K}$ of the transmitter UPA and the arriving angles $\left\{\varphi_k^{(G)}, \nu_k^{(G)}\right\}_{1\le k\le K}$ of the receiver UPA are uniformly and randomly generated in some fixed intervals having length $0.05\pi$. The departing angles $\left\{\theta_k^{(G)},\phi_k^{(G)}\right\}_{1\le k\le K}$ and the arriving angles $\left\{\varphi_k^{(F)},\nu_k^{(F)}\right\}_{1\le k\le K}$ of the RIS panels are randomly generated in some fixed intervals having length $0.1\pi$. As $K$ increases, Fig.~\ref{figMI_K_T} shows that the mutual information first improves at a larger rate between $0\le K\le 5$, and then becomes slower thereafter. This is due to the fact that the richness of the channels can be improved more efficiently when the number of reflected links is small. Since the angular ranges are restricted, the added RIS panels have similar reflected links that cannot provide additional richness. Therefore, it is less effective to deploy more RIS panels to improve the mutual information. Finally, as shown in Figs.~\ref{figMI_SNR_T}-\ref{figMI_K_T}, the mutual information calculated by (\ref{eqI_B}) via the Cauchy transform (\ref{eqGB}) achieves a good agreement with the simulation in all the considered simulation cases, and thus, can be applied to evaluate the performance of the RIS-assisted MIMO channels.

\begin{figure}[t]
	\centerline{\includegraphics[width=0.6\columnwidth]{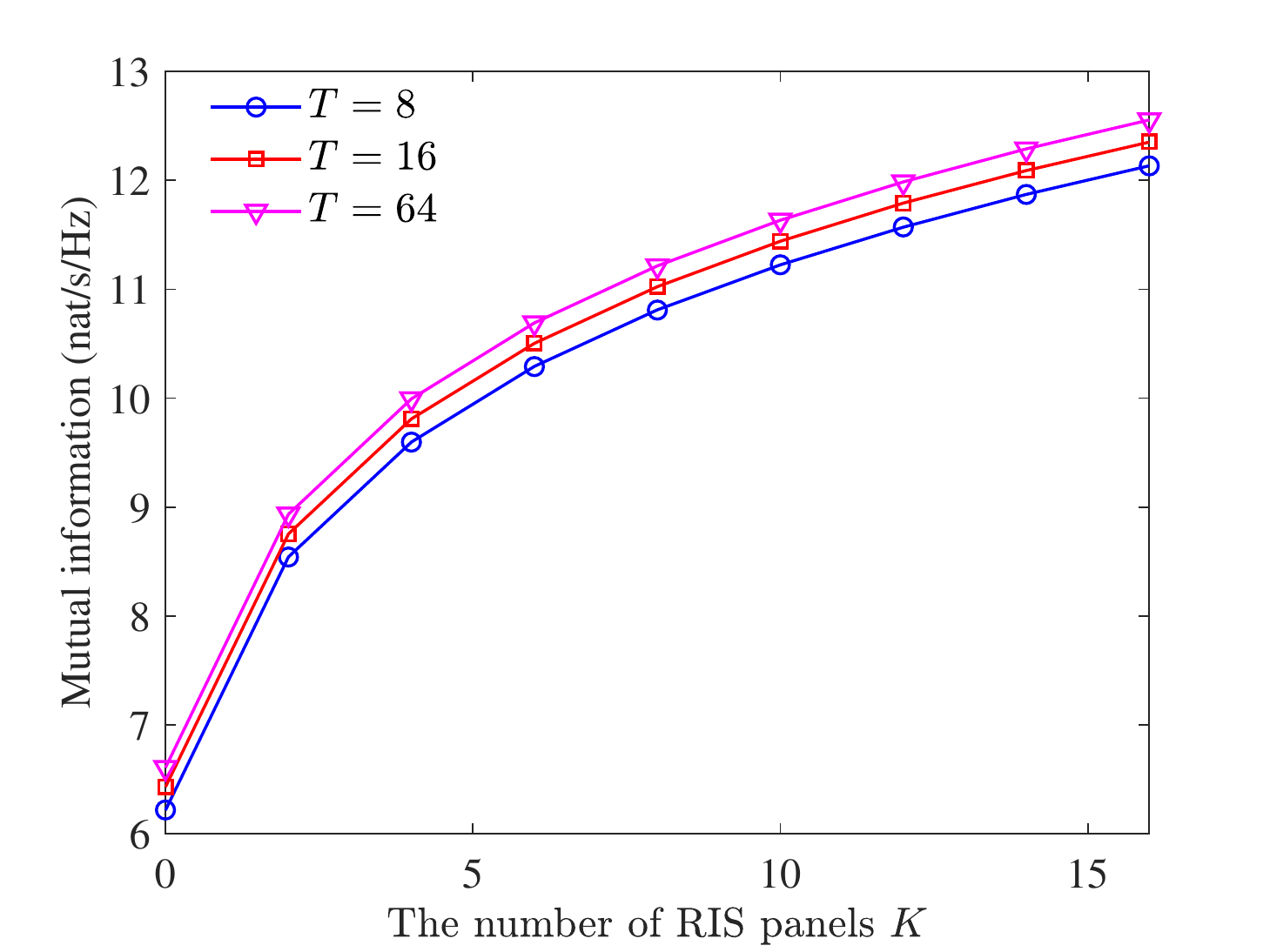}
	}
	\caption{Mutual information of RIS-assisted MIMO channel for varying numbers of RIS panels $K$. The number of receive antennas is $R = 8$, the number of elements in each RIS panel is 8, and the SNR of the channel is $\gamma = 10$ dB.}
	\label{figMI_K_T}
\end{figure}

\section{Conclusions}\label{secConclude}

This paper studies the information-theoretic data rate of the RIS-assisted MIMO systems, where multiple RIS panels are deployed to improve the scattering-limited MIMO channels. By using the operator-valued free probability theory, the Cauchy transform of the MIMO matrix is obtained using the general Rician MIMO model with Weichselberger's correlation structure. Based on this result, the asymptotic eigenvalue distribution of the channel matrix as well as the mutual information of the MIMO channel are calculated, which closely match the corresponding simulation results for practical system configurations. Numerical results show that the additional reflected links created by the RIS panels can increase the range of eigenvalues of the channel matrix, which can be leveraged to improve the amplitude of the eigen-channels. In the MIMO communications, the negative impact of a large Rician factor on the mutual information can be partly alleviated by deploying more RIS panels. However, the performance improvement of the multi-RIS deployment slows down when the added reflected links have similar arriving and departing angles.

\begin{appendices}
\section{Some Useful Matrix Inversion Identities}\label{appLemma}

For the sake of completeness, the following matrix inversion identities are summarized in Lemmas~\ref{lemmaSumInv}-\ref{lemmaBlockInv3}, which are repeatedly applied throughout this paper. For notational simplicity, in this appendix, we use {\em italic} bold symbols to define matrices, which are different from those used in the main sections.

\begin{lemma}{(Woodbury matrix inversion identity~\cite[Eq. (0.7.4.1)]{HornBook}.)}\label{lemmaSumInv}
	Let $\bm{A}$ denote a $m\times m$ invertible matrix, $\bm{D}$ denote a $k\times k$ matrix, $\bm{B}$ and $\bm{C}$ denote $m\times k$ and $k\times m$ matrices, respectively. Then the following identity holds
	\begin{align}
		\left(\bm{A}+\bm{B} \bm{D} \bm{C}\right)^{-1} = \bm{A}^{-1} - \bm{A}^{-1}\bm{B}\left(\bm{D}^{-1} + \bm{C} \bm{A}^{-1}\bm{B}\right)^{-1}\bm{C} \bm{A}^{-1}.
	\end{align}
\end{lemma}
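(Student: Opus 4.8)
The plan is to verify the identity by direct substitution, checking that the matrix displayed on the right-hand side really is the inverse of $\bm{A} + \bm{B}\bm{D}\bm{C}$. Since that expression contains $\bm{D}^{-1}$, I treat $\bm{D}$ and $\bm{D}^{-1} + \bm{C}\bm{A}^{-1}\bm{B}$ as invertible, and to streamline the bookkeeping I abbreviate $\bm{E} = \left(\bm{D}^{-1} + \bm{C}\bm{A}^{-1}\bm{B}\right)^{-1}$ together with the candidate inverse $\bm{X} = \bm{A}^{-1} - \bm{A}^{-1}\bm{B}\bm{E}\bm{C}\bm{A}^{-1}$. Because every matrix involved is square, it is enough to establish the one-sided relation $(\bm{A} + \bm{B}\bm{D}\bm{C})\bm{X} = \bm{I}$, a left inverse of a square matrix being automatically its two-sided inverse.

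First I would expand the product into its four terms,
\begin{align}
(\bm{A} + \bm{B}\bm{D}\bm{C})\bm{X} = \bm{I} - \bm{B}\bm{E}\bm{C}\bm{A}^{-1} + \bm{B}\bm{D}\bm{C}\bm{A}^{-1} - \bm{B}\bm{D}\bm{C}\bm{A}^{-1}\bm{B}\bm{E}\bm{C}\bm{A}^{-1},\nonumber
\end{align}
and then pull a common factor $\bm{B}(\cdot)\bm{C}\bm{A}^{-1}$ out of the three correction terms, which reduces the right-hand side to $\bm{I} + \bm{B}\left[\bm{D} - \left(\bm{I} + \bm{D}\bm{C}\bm{A}^{-1}\bm{B}\right)\bm{E}\right]\bm{C}\bm{A}^{-1}$.

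The crux of the computation is the single simplification
\begin{align}
\left(\bm{I} + \bm{D}\bm{C}\bm{A}^{-1}\bm{B}\right)\bm{E} = \bm{D}\left(\bm{D}^{-1} + \bm{C}\bm{A}^{-1}\bm{B}\right)\bm{E} = \bm{D}\bm{E}^{-1}\bm{E} = \bm{D},\nonumber
\end{align}
in which I factor $\bm{D}$ to the left so as to recover the defining expression $\bm{E}^{-1} = \bm{D}^{-1} + \bm{C}\bm{A}^{-1}\bm{B}$ and then cancel $\bm{E}^{-1}\bm{E} = \bm{I}$. Feeding this back makes the bracketed matrix equal to $\bm{D} - \bm{D} = \bm{0}$, the correction terms annihilate one another, and the product collapses to $\bm{I}$, completing the verification.

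I do not expect a substantive obstacle, as this is a determinant-free algebraic rearrangement; the only points demanding care are the sign bookkeeping across the four expanded terms and the recognition that $\bm{I} + \bm{D}\bm{C}\bm{A}^{-1}\bm{B}$ factors as $\bm{D}\bm{E}^{-1}$. Should the direct expansion prove error-prone, a clean alternative is to invert the bordered block matrix $\begin{bmatrix} \bm{A} & -\bm{B} \\ \bm{C} & \bm{D}^{-1} \end{bmatrix}$ in two ways: the Schur complement of the $(2,2)$ block yields $(\bm{A} + \bm{B}\bm{D}\bm{C})^{-1}$ as its $(1,1)$ block, while the Schur complement of the $(1,1)$ block yields $\bm{A}^{-1} - \bm{A}^{-1}\bm{B}\bm{E}\bm{C}\bm{A}^{-1}$, and equating the two expressions for the same $(1,1)$ block delivers the claimed identity at once.
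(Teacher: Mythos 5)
Your verification is correct, but note that the paper contains no proof of this lemma to compare against: it is quoted directly from Horn and Johnson~\cite[Eq.~(0.7.4.1)]{HornBook} as a known identity, alongside the equally unproved Lemma~\ref{lemmaBlockInv2}. Your argument is the standard direct check and is sound: the four-term expansion of $(\bm{A}+\bm{B}\bm{D}\bm{C})\bm{X}$, the factoring of $\bm{B}(\cdot)\bm{C}\bm{A}^{-1}$ out of the correction terms, and the collapse $(\bm{I}+\bm{D}\bm{C}\bm{A}^{-1}\bm{B})\bm{E}=\bm{D}\bm{E}^{-1}\bm{E}=\bm{D}$ are all valid, and the observation that a one-sided inverse of a square matrix is two-sided legitimately finishes the job. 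Two minor points deserve explicit mention: the key factorization $\bm{I}+\bm{D}\bm{C}\bm{A}^{-1}\bm{B}=\bm{D}\left(\bm{D}^{-1}+\bm{C}\bm{A}^{-1}\bm{B}\right)$ requires $\bm{D}$ to be invertible, and the whole formula requires $\bm{D}^{-1}+\bm{C}\bm{A}^{-1}\bm{B}$ to be invertible; both are implicit in the lemma as stated (the displayed identity contains both inverses), though the hypothesis list only says ``$\bm{D}$ denote a $k\times k$ matrix,'' so it is worth flagging that these are tacit assumptions rather than consequences. Finally, your fallback route---inverting the bordered matrix $\begin{bmatrix}\bm{A} & -\bm{B}\\ \bm{C} & \bm{D}^{-1}\end{bmatrix}$ via the two Schur complements and equating the $(1,1)$ blocks---is essentially the content of Lemma~\ref{lemmaBlockInv2}, and that double-Schur-complement technique is exactly how the paper proves its own Lemma~\ref{lemmaBlockInv3} in Appendix~\ref{appLemma}; so that alternative, while no more rigorous, would tie your argument most closely to the toolkit the paper actually develops.
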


\begin{lemma}{($2\times 2$ block matrix inversion identity~\cite[Eq. (0.7.3.1)]{HornBook}.)}\label{lemmaBlockInv2}
	Let $\bm{A}$, $\bm{B}$, $\bm{C}$, and $\bm{D}$ be defined as in Lemma~\ref{lemmaSumInv}, the inversion identity of the following $2\times 2$ block matrix holds
\begin{align}
	\begin{bmatrix}
		\bm{A} & \bm{B}\\
		\bm{C} & \bm{D}
	\end{bmatrix}^{-1} &= \begin{bmatrix}
	\bm{A}^{-1} + \bm{A}^{-1}\bm{B}(\bm{D}-\bm{C} \bm{A}^{-1} \bm{B})^{-1}\bm{C} \bm{A}^{-1} & -\bm{A}^{-1} \bm{B}(\bm{D}-\bm{C} \bm{A}^{-1}\bm{B})^{-1}\\
	-(\bm{D} - \bm{C} \bm{A}^{-1} \bm{B})^{-1} \bm{C} \bm{A}^{-1} & (\bm{D} - \bm{C} \bm{A}^{-1}\bm{B})^{-1}
\end{bmatrix}\nonumber\\
&= \begin{bmatrix}
	(\bm{A} - \bm{B} \bm{D}^{-1}\bm{C})^{-1} & -\bm{A}^{-1} \bm{B}(\bm{D}-\bm{C} \bm{A}^{-1} \bm{B})^{-1}\\
	-(\bm{D} - \bm{C} \bm{A}^{-1} \bm{B})^{-1} \bm{C} \bm{A}^{-1} & (\bm{D} - \bm{C} \bm{A}^{-1} \bm{B})^{-1}
\end{bmatrix},
\end{align}
where the second equality holds when $\bm{D}$ is also invertible.
\end{lemma}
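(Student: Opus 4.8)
The plan is to prove the first equality by exhibiting an explicit block triangular factorization of the $2\times 2$ block matrix and inverting it factor by factor, and then to obtain the second equality (which differs only in the $(1,1)$ block) as a direct application of the Woodbury identity in Lemma~\ref{lemmaSumInv}. Throughout, I would write $\bm{S} = \bm{D} - \bm{C}\bm{A}^{-1}\bm{B}$ for the Schur complement of $\bm{A}$, which is the $k\times k$ block appearing inverted in every entry on the right-hand side.

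First I would record the factorization
\begin{align}
\begin{bmatrix} \bm{A} & \bm{B}\\ \bm{C} & \bm{D}\end{bmatrix}
= \begin{bmatrix} \bm{I} & \bm{0}\\ \bm{C}\bm{A}^{-1} & \bm{I}\end{bmatrix}
\begin{bmatrix} \bm{A} & \bm{0}\\ \bm{0} & \bm{S}\end{bmatrix}
\begin{bmatrix} \bm{I} & \bm{A}^{-1}\bm{B}\\ \bm{0} & \bm{I}\end{bmatrix},\nonumber
\end{align}
which is verified by multiplying out the right-hand side and using $\bm{C}\bm{A}^{-1}\bm{B} + \bm{S} = \bm{D}$. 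Since $\bm{A}$ is invertible by hypothesis and $\bm{S}$ is invertible (equivalently, the full block matrix is invertible, as its determinant equals $\det(\bm{A})\det(\bm{S})$), all three factors are invertible, so the inverse is the product of the individual inverses in reverse order. Each factor inverts trivially: the two unit block-triangular factors invert by negating their off-diagonal block, and the block-diagonal factor inverts entrywise to $\mathrm{blkdiag}(\bm{A}^{-1},\bm{S}^{-1})$. Multiplying
\begin{align}
\begin{bmatrix} \bm{I} & -\bm{A}^{-1}\bm{B}\\ \bm{0} & \bm{I}\end{bmatrix}
\begin{bmatrix} \bm{A}^{-1} & \bm{0}\\ \bm{0} & \bm{S}^{-1}\end{bmatrix}
\begin{bmatrix} \bm{I} & \bm{0}\\ -\bm{C}\bm{A}^{-1} & \bm{I}\end{bmatrix}\nonumber
\end{align}
then reproduces exactly the four blocks of the first claimed expression, where the $(1,1)$ entry comes out as $\bm{A}^{-1} + \bm{A}^{-1}\bm{B}\bm{S}^{-1}\bm{C}\bm{A}^{-1}$.

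For the second equality only the $(1,1)$ block changes, so it suffices to show $\bm{A}^{-1} + \bm{A}^{-1}\bm{B}\bm{S}^{-1}\bm{C}\bm{A}^{-1} = (\bm{A} - \bm{B}\bm{D}^{-1}\bm{C})^{-1}$ under the additional hypothesis that $\bm{D}$ is invertible. This I would get from Lemma~\ref{lemmaSumInv} by substituting $\bm{D}\mapsto -\bm{D}^{-1}$ there, which turns the left-hand side $(\bm{A}+\bm{B}\bm{D}\bm{C})^{-1}$ into $(\bm{A} - \bm{B}\bm{D}^{-1}\bm{C})^{-1}$ and the correction term into $-\bm{A}^{-1}\bm{B}(-\bm{D} + \bm{C}\bm{A}^{-1}\bm{B})^{-1}\bm{C}\bm{A}^{-1} = +\bm{A}^{-1}\bm{B}\bm{S}^{-1}\bm{C}\bm{A}^{-1}$, matching the expression obtained above. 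Since this is a verification identity, there is no deep obstacle; the only points requiring care are the bookkeeping of the invertibility hypotheses ($\bm{A}$ and $\bm{S}$ for the first form, and additionally $\bm{D}$ for the second) and getting the sign right in the $-\bm{D}^{-1}$ substitution so that $(-\bm{D}+\bm{C}\bm{A}^{-1}\bm{B})^{-1} = -\bm{S}^{-1}$ flips the subtraction into the required addition.
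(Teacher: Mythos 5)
Your proposal is correct, and there is nothing in the paper to compare it against: the paper states this lemma without proof, citing it directly as Eq.~(0.7.3.1) of Horn and Johnson, and only \emph{uses} it (twice, in the proof of Lemma~\ref{lemmaBlockInv3} and once in deriving (\ref{eqBL1})). Your block-LDU argument is the standard textbook derivation and checks out in every detail: the factorization
\begin{align}
\begin{bmatrix} \bm{A} & \bm{B}\\ \bm{C} & \bm{D}\end{bmatrix}
= \begin{bmatrix} \bm{I} & \bm{0}\\ \bm{C}\bm{A}^{-1} & \bm{I}\end{bmatrix}
\begin{bmatrix} \bm{A} & \bm{0}\\ \bm{0} & \bm{S}\end{bmatrix}
\begin{bmatrix} \bm{I} & \bm{A}^{-1}\bm{B}\\ \bm{0} & \bm{I}\end{bmatrix}
\end{align}
multiplies out correctly via $\bm{C}\bm{A}^{-1}\bm{B}+\bm{S}=\bm{D}$, the reversed product of the three (trivially invertible) factors reproduces all four blocks of the first form, and your Woodbury substitution $\bm{D}\mapsto -\bm{D}^{-1}$ in Lemma~\ref{lemmaSumInv} handles the sign correctly, since $(-\bm{D}+\bm{C}\bm{A}^{-1}\bm{B})^{-1}=-\bm{S}^{-1}$ turns the subtracted correction term into the required $+\bm{A}^{-1}\bm{B}\bm{S}^{-1}\bm{C}\bm{A}^{-1}$. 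You are also right to track the invertibility hypotheses explicitly ($\bm{A}$ and $\bm{S}$ for the first form, additionally $\bm{D}$ for the second), which the paper leaves implicit; your observation that invertibility of $\bm{S}$ is equivalent to that of the full block matrix, given invertible $\bm{A}$, follows immediately from your factorization since the two triangular factors are unimodular. A pleasant side remark: your method is the same one the paper deploys one level up, since its proof of Lemma~\ref{lemmaBlockInv3} proceeds by exactly this kind of Schur-complement reduction, applying Lemma~\ref{lemmaBlockInv2} twice and then Lemma~\ref{lemmaSumInv} to simplify, so your derivation is stylistically consistent with the appendix it would sit in.
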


\begin{lemma}{($3\times 3$ block matrix inversion identity.)}\label{lemmaBlockInv3}
	Let the matrices $\bm{E}$, $\bm{F}$, $\bm{G}$, $\bm{H}$, $\bm{J}$, $\bm{K}$, $\bm{L}$, $\bm{M}$, and $\bm{N}$ be the conformable partitions of the following $3\times 3$ block matrix $\bm{X}$
	\begin{align*}
		\bm{X} = \begin{bmatrix}
			\bm{E} & \bm{F} & \bm{G}\\
			\bm{H} & \bm{J} & \bm{K}\\
			\bm{L} & \bm{M} & \bm{N}
		\end{bmatrix}.
	\end{align*}
	When $\bm{E}$ is invertible, the inversion of $\bm{X}$ is given by 
	\begin{align*}
		\bm{X}^{-1} &= \begin{bmatrix}
		\bm{E}^{-1} + \bm{E}^{-1}(\bm{F} \bm{A}^{-1}\bm{H} + \bm{U} \bm{S}^{-1}\bm{V})\bm{E}^{-1} & -\bm{E}^{-1}(\bm{F} - \bm{U} \bm{S}^{-1} \bm{C})\bm{A}^{-1} & -\bm{E}^{-1} \bm{U} \bm{S}^{-1} \\
		-\bm{A}^{-1}(\bm{H} - \bm{B} \bm{S}^{-1} \bm{V})\bm{E}^{-1} & \bm{A}^{-1} + \bm{A}^{-1} \bm{B} \bm{S}^{-1} \bm{C} \bm{A}^{-1} & - \bm{A}^{-1} \bm{B} \bm{S}^{-1}\\
		-\bm{S}^{-1} \bm{V} \bm{E}^{-1} & -\bm{S}^{-1} \bm{C} \bm{A}^{-1} & \bm{S}^{-1}
	\end{bmatrix},
	\end{align*}
	where 
	\begin{align}
		\bm{A} &= \bm{J} - \bm{H} \bm{E}^{-1}\bm{F},\quad	\bm{B} = \bm{K} - \bm{H} \bm{E}^{-1}\bm{G},\quad \bm{C} = \bm{M} - \bm{L} \bm{E}^{-1}\bm{F},\quad	\bm{D} = \bm{N} - \bm{L} \bm{E}^{-1}\bm{G},\label{eqABCD}\\
		\bm{U} &= \bm{G} - \bm{F} \bm{A}^{-1}\mb{B},\quad	\bm{V} = \bm{L} - \bm{C} \bm{A}^{-1}\bm{H},\label{eqUV}\\
		\bm{S} &= \bm{D} - \bm{C} \bm{A}^{-1}\bm{B}.\label{eqS}
	\end{align}
\end{lemma}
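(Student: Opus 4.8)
The plan is to derive the $3\times 3$ inverse by applying the $2\times 2$ block inversion of Lemma~\ref{lemmaBlockInv2} twice, exploiting the nested Schur-complement structure already encoded in the definitions (\ref{eqABCD})--(\ref{eqS}). First I would regroup $\bm{X}$ as a $2\times 2$ block matrix with the invertible block $\bm{E}$ in the top-left corner, collecting $\bm{B}' = \begin{bmatrix} \bm{F} & \bm{G}\end{bmatrix}$, $\bm{C}' = \begin{bmatrix} \bm{H} \\ \bm{L}\end{bmatrix}$, and $\bm{W} = \begin{bmatrix} \bm{J} & \bm{K} \\ \bm{M} & \bm{N}\end{bmatrix}$ into the remaining blocks. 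Applying the first equality of Lemma~\ref{lemmaBlockInv2} with $\bm{E}$ as the invertible top-left block, the Schur complement $\bm{\Sigma} = \bm{W} - \bm{C}' \bm{E}^{-1}\bm{B}'$ is exactly $\begin{bmatrix}\bm{A} & \bm{B}\\ \bm{C} & \bm{D}\end{bmatrix}$, where $\bm{A}, \bm{B}, \bm{C}, \bm{D}$ are the blockwise Schur complements defined in (\ref{eqABCD}).

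Next I would invert $\bm{\Sigma}$ by a second application of Lemma~\ref{lemmaBlockInv2}, now with $\bm{A}$ as the invertible top-left block and $\bm{S} = \bm{D} - \bm{C}\bm{A}^{-1}\bm{B}$ from (\ref{eqS}) as the inner Schur complement. This gives $\bm{\Sigma}^{-1}$ explicitly in terms of $\bm{A}^{-1}$ and $\bm{S}^{-1}$. Since the bottom-right $2\times 2$ block of $\bm{X}^{-1}$ equals $\bm{\Sigma}^{-1}$ verbatim, the $(2,2)$, $(2,3)$, $(3,2)$, and $(3,3)$ blocks of the claimed inverse follow immediately and already match the stated entries, with no further manipulation needed.

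The remaining first-row, first-column, and $(1,1)$ blocks come from substituting $\bm{\Sigma}^{-1}$ into the off-diagonal terms $-\bm{E}^{-1}\bm{B}'\bm{\Sigma}^{-1}$ and $-\bm{\Sigma}^{-1}\bm{C}'\bm{E}^{-1}$, and into the Schur-corrected term $\bm{E}^{-1} + \bm{E}^{-1}\bm{B}'\bm{\Sigma}^{-1}\bm{C}'\bm{E}^{-1}$. Here I would carry out the block-row/block-column products and collect terms: in the $(1,2)$ and $(1,3)$ blocks the combination $\bm{G} - \bm{F}\bm{A}^{-1}\bm{B}$ assembles into $\bm{U}$, while in the $(2,1)$ and $(3,1)$ blocks the combination $\bm{L} - \bm{C}\bm{A}^{-1}\bm{H}$ assembles into $\bm{V}$, with $\bm{U}, \bm{V}$ as in (\ref{eqUV}); the $(1,1)$ block collapses to $\bm{E}^{-1} + \bm{E}^{-1}(\bm{F}\bm{A}^{-1}\bm{H} + \bm{U}\bm{S}^{-1}\bm{V})\bm{E}^{-1}$ after the same cancellations.

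The computation is entirely routine once the double Schur-complement structure is in place, so the only real effort is bookkeeping. \emph{The main obstacle is the off-diagonal simplification:} the raw products generate mixtures such as $\bm{F}\bm{A}^{-1}\bm{B}\bm{S}^{-1}\bm{C}\bm{A}^{-1}$ together with $-\bm{G}\bm{S}^{-1}\bm{C}\bm{A}^{-1}$, which must be recognized as $-(\bm{G}-\bm{F}\bm{A}^{-1}\bm{B})\bm{S}^{-1}\bm{C}\bm{A}^{-1} = -\bm{U}\bm{S}^{-1}\bm{C}\bm{A}^{-1}$, and symmetrically on the column side with $\bm{V}$. Identifying these groupings correctly is precisely what makes the final blocks reduce to the compact form asserted in the statement.
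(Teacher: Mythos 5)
Your proposal is correct and essentially identical to the paper's own proof: both partition $\bm{X}$ as a $2\times 2$ block matrix around the invertible block $\bm{E}$, apply Lemma~\ref{lemmaBlockInv2} once to $\bm{X}$ and once more to the resulting Schur complement (whose sub-blocks are exactly $\bm{A}$, $\bm{B}$, $\bm{C}$, $\bm{D}$ of (\ref{eqABCD}) with inner Schur complement $\bm{S}$), and then substitute back into the corner blocks, collecting the mixed products into $\bm{U}$ and $\bm{V}$ of (\ref{eqUV}). The groupings you flag as the main bookkeeping obstacle, e.g.\ $\bm{F}\bm{A}^{-1}\bm{B}\bm{S}^{-1}\bm{C}\bm{A}^{-1}-\bm{G}\bm{S}^{-1}\bm{C}\bm{A}^{-1}=-\bm{U}\bm{S}^{-1}\bm{C}\bm{A}^{-1}$ and its column-side analogue with $\bm{V}$, are precisely the simplifications carried out in the paper's derivation of its blocks $\bm{P}$, $\bm{Q}$, and $\bm{R}$.
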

\begin{proof}
	Apply Lemma~\ref{lemmaBlockInv2} to the inversion of $\bm{X}$ that is partitioned into a $2\times 2$ block matrix as
	\begin{align}
		\bm{X}^{-1} = \left[\begin{Array}{c:cc}
			\bm{E} & \bm{F} & \bm{G}\\\hdashline
			\bm{H} & \bm{J} & \bm{K}\\
			\bm{L} & \bm{M} & \bm{N}
		\end{Array}\right]^{-1} = \begin{bmatrix}
		\bm{P} & \bm{Q}\\
\bm{R} & \bm{Z}^{-1}
	\end{bmatrix},\label{eqXinv}
	\end{align}
where the matrix blocks $\bm{P}$, $\bm{Q}$, and $\bm{R}$ are given by
\begin{align}
	\bm{P} &= \bm{E}^{-1} + \bm{E}^{-1}\begin{bmatrix}
		\bm{F} & \bm{G}
	\end{bmatrix} \bm{Z}^{-1} \begin{bmatrix}
		\bm{H}\\ \bm{L}
	\end{bmatrix} \bm{E}^{-1},\label{eqP}\\
	\bm{Q} &= -\bm{E}^{-1} \begin{bmatrix}
		\bm{F} & \bm{G}
	\end{bmatrix} \bm{Z}^{-1},\label{eqQ}\\
	\bm{R} &= -\bm{Z}^{-1} \begin{bmatrix}
		\bm{H}\\ \bm{L}
	\end{bmatrix} \bm{E}^{-1},\label{eqRmat}
\end{align}
and the matrix $\bm{Z}$ is a $2\times 2$ block matrix such that
\begin{align}
	\bm{Z} = \begin{bmatrix}
		\bm{J} & \bm{K}\\
		\bm{M} & \bm{N}
		\end{bmatrix} - \begin{bmatrix}
		\bm{H}\\ \bm{L}
	\end{bmatrix} \bm{E}^{-1} \begin{bmatrix}
	\bm{F} & \bm{G}
\end{bmatrix} = \begin{bmatrix}
\bm{A} & \bm{B}\\
\bm{C} & \bm{D}
\end{bmatrix},
\end{align}
where $\bm{A}$, $\bm{B}$, $\bm{C}$, and $\bm{D}$ are given in (\ref{eqABCD}).

Applying again Lemma~\ref{lemmaBlockInv2} to the inversion of $\mb{Z}$, we obtain
\begin{align}
	\bm{Z}^{-1} = \begin{bmatrix}
		\bm{A}^{-1} + \bm{A}^{-1}\bm{B}\bm{S}^{-1}\bm{C}\bm{A}^{-1} & -\bm{A}^{-1}\bm{B}\bm{S}^{-1}\\
		-\bm{S}^{-1}\bm{C}\bm{A}^{-1} & \bm{S}^{-1}
	\end{bmatrix},\label{eqZinv}
\end{align}
where $\bm{S}$ is given in (\ref{eqS}). Substituting (\ref{eqZinv}) into (\ref{eqP}), $\bm{P}$ can be rewritten as
\begin{align}
	\bm{P} &= \bm{E}^{-1} + \bm{E}^{-1} \begin{bmatrix}
		\bm{F}\bm{A}^{-1} - \bm{U}\bm{S}^{-1}\bm{C}\bm{A}^{-1} & \bm{U}\bm{S}^{-1}
	\end{bmatrix}\begin{bmatrix}
	\bm{H}\\\bm{L}
\end{bmatrix} \bm{E}^{-1}\nonumber\\
	&= \bm{E}^{-1} + \bm{E}^{-1} \left( \bm{F}\bm{A}^{-1}\bm{H} + \bm{U}\bm{S}^{-1}\bm{V}  \right) \bm{E}^{-1},
\end{align} 
where $\bm{U}$ and $\bm{V}$ are defined in (\ref{eqUV}). Similarly, $\bm{Q}$ and $\bm{R}$ can be obtained as
\begin{align}
	\bm{Q} &= \begin{bmatrix}
		-\bm{E}^{-1}\left(\bm{F} -(\bm{G} - \bm{F}\bm{A}^{-1}\bm{B})\bm{S}^{-1}\bm{C} \right)\bm{A}^{-1} & -\bm{E}^{-1}\left(\bm{G} - \bm{F}\bm{A}^{-1}\bm{B}\right)\bm{S}^{-1}
	\end{bmatrix}\nonumber\\
		&= \begin{bmatrix}
			-\bm{E}^{-1}\left(\bm{F} -\bm{U}\bm{S}^{-1}\bm{C} \right)\bm{A}^{-1} & -\bm{E}^{-1}\bm{U}\bm{S}^{-1}
		\end{bmatrix},\\
	\bm{R} &= \begin{bmatrix}
		-\bm{A}^{-1}\bm{H}\bm{E}^{-1} + \bm{A}^{-1}\bm{B}\bm{S}^{-1}(\bm{L} - \bm{C}\bm{A}^{-1}\bm{H})\bm{E}^{-1} \\ -\bm{S}^{-1}(\bm{L} - \bm{C}\bm{A}^{-1}\bm{H})\bm{E}^{-1}
	\end{bmatrix} = \begin{bmatrix}
	-\bm{A}^{-1}(\bm{H} - \bm{B}\bm{S}^{-1}\bm{V})\bm{E}^{-1} \\ -\bm{S}^{-1}\bm{V}\bm{E}^{-1}
\end{bmatrix}.\label{eqRmat2}
\end{align}
Finally, substituting (\ref{eqZinv})-(\ref{eqRmat2}) into (\ref{eqXinv}) completes the proof of Lemma~\ref{lemmaBlockInv3}.
\end{proof}

\section{Proof of Proposition~\ref{prop_freeness}}\label{appx_freeness}
	A random variable $\widetilde{\mb{L}}\in\mc{M}$ is said to be $\mc{D}$-valued semicircular if the free cumulant 
\begin{align}
	\kappa_m^{\mc{D}}(\widetilde{\mb{L}} b_1, \widetilde{\mb{L}} b_2, \ldots, \widetilde{\mb{L}} b_{m-1}, \widetilde{\mb{L}}) = 0,
\end{align}
for all $n\neq 2$, and all $b_1,\ldots,b_{n-1}\in\mc{D}$. The free cumulant $\kappa_m^{\mc{D}}$ is a mapping from $\mc{M}^m$ to $\mc{D}$ and we refer the reader to~\cite{MingoSpeicherBook} for detailed explanations on this topic. The proof is followed by expanding $\widetilde{\mb{L}}$ into a sum of $n\times n$ matrices, such that
\begin{align}
	\widetilde{\mb{L}} = \sum_{k = 0}^K \widetilde{\mb{L}}_k^{(F)} + \sum_{k = 1}^K \widetilde{\mb{L}}_k^{(G)},
\end{align}
where the matrices $\widetilde{\mb{L}}_k^{(F)}$ and $\widetilde{\mb{L}}_k^{(G)}$ are given by
\begin{align}
	\widetilde{\mb{L}}_k^{(F)} &= \left[\begin{array}{c:c:c:c}
		& & & \mb{0}_{R\times L} \\\hdashline
		& & \widehat{{\mb{F}}}_k & \\\hdashline
		& \widehat{{\mb{F}}}_k^\dagger & & \\\hdashline
		\mb{0}_{L\times R} & & &
	\end{array}\right],\label{eqLkF}\\
	\widetilde{\mb{L}}_k^{(G)} &= \left[\begin{array}{c:c:c:c}
		& & & \widehat{{\mb{G}}}_k \\\hdashline
		& & \mb{0}_{L\times T} & \\\hdashline
		& \mb{0}_{T\times L} & & \\\hdashline
		\widehat{{\mb{G}}}_k^\dagger & & &
	\end{array}\right],\label{eqLkG}
\end{align}
where $\widehat{{\mb{F}}}_k$ and $\widehat{{\mb{G}}}_k$ are $L\times T$ and $R\times L$ matrices, respectively, and are given by
\begin{align}
	\widehat{{\mb{F}}}_k &= \begin{bmatrix}
		\mb{0}_{T\times L_0} & \ldots & \widetilde{\mb{F}}_k^\dagger & \ldots &\mb{0}_{T\times L_K}
	\end{bmatrix}^\dagger, \quad 0\le k\le K,\label{eqFhat}\\
	\widehat{{\mb{G}}}_k &= \begin{bmatrix}
		\mb{0}_{R\times R} & \mb{0}_{R\times L_1} & \ldots & \sqrt{\rho_k} \widetilde{\mb{G}}_k & \ldots &\mb{0}_{R\times L_K}
	\end{bmatrix},\quad 1\le k\le K.\label{eqGhat}
\end{align}

Recalling the definitions of $\widetilde{\mb{F}}_k$ and $\widetilde{\mb{G}}_k$ in (\ref{eqFk}) and (\ref{eqGk}), we have
\begin{align}
	\widetilde{\mb{L}}_k^{(F)} &= \bm{\mc{A}}_k^{(F)} \widetilde{\bm{\mc{X}}}_k\bm{\mc{A}}_k^{(F)\dagger},\\
	\widetilde{\mb{L}}_k^{(G)} &= \bm{\mc{A}}_k^{(G)} \widetilde{\bm{\mc{Y}}}_k\bm{\mc{A}}_k^{(G)\dagger},
\end{align}
where the matrix $\widetilde{\bm{\mc{X}}}_k$ has the same structure as the block matrix $\widetilde{\mb{L}}_k^{(F)}$ in (\ref{eqLkF}) while replacing $\widetilde{\mb{F}}_k$ in (\ref{eqFhat}) with $\widetilde{\mb{X}}_k = \mb{M}_k\odot\mb{X}_k$, and $\widetilde{\bm{\mc{Y}}}_k$ has the same structure as the block matrix $\widetilde{\mb{L}}_k^{(G)}$ in (\ref{eqLkG}) while replacing $\widetilde{\mb{G}}_k$ in (\ref{eqGhat}) with $\widetilde{\mb{Y}}_k = \frac{1}{\sqrt{r_k}}\mb{N}_k\odot\mb{Y}_k$. The $n\times n$ matrices $\bm{\mc{A}}_k^{(F)}$ and $\bm{\mc{A}}_k^{(G)}$ are given by
\begin{align}
	\bm{\mc{A}}_k^{(F)} &= \begin{bmatrix}
		\widehat{\mb{U}}_k & \mb{0}_{(R+L)\times(T+L)} \\ \mb{0}_{(T+L)\times (R+L)} & \widehat{\mb{V}}_k
	\end{bmatrix},\\
	\bm{\mc{A}}_k^{(G)} &= \begin{bmatrix}
		\widehat{\mb{W}}_k & \mb{0}_{(R+L)\times(T+L)} \\ \mb{0}_{(T+L)\times(R+L)} & \widehat{\mb{S}}_k
	\end{bmatrix},
\end{align}
where $\widehat{\mb{U}}_k$, $\widehat{\mb{V}}_k$, $\widehat{\mb{W}}_k$, $\widehat{\mb{S}}_k$ are deterministic diagonal block matrices and are given by
\begin{align}
	\widehat{\mb{U}}_k &= \mathrm{blkdiag}(\mb{0}_R, \mb{0}_{L_0},\ldots,\mb{U}_k,\ldots,\mb{0}_{L_K}),\\
	\widehat{\mb{V}}_k &= \mathrm{blkdiag}(\mb{V}_k,\mb{0}_L),\\
	\widehat{\mb{W}}_k &= \mathrm{blkdiag}(\mb{W}_k,\mb{0}_{L}),\\
	\widehat{\mb{S}}_k &= \mathrm{blkdiag}(\mb{0}_T, \mb{0}_{L_0},\ldots,\mb{S}_k,\ldots,\mb{0}_{L_K}).
\end{align}
Since $\{\widetilde{\bm{\mc{X}}}_k\}_{0\le k\le K}$, $\{\widetilde{\bm{\mc{Y}}}_k\}_{1\le k\le K}$ are Wigner matrices and independent from each other, they are semicircular and free over the sub-algebra $\mc{D}_n\subset\mc{M}$ of $n\times n$ diagonal matrices. Then, following the same arguments as in~\cite[Appendix B]{LuTIT2016}, $\{\widetilde{\mb{L}}_k^{(F)}\}_{0\le k\le K}$ and $\{\widetilde{\mb{L}}_k^{(G)}\}_{1\le k\le K}$ are semicircular and free over sub-algebra of block diagonal matrices $\mc{D}$. Therefore, the sum of $\widetilde{\mb{L}}_k^{(F)}$ and $\widetilde{\mb{L}}_k^{(G)}$ is also semicircular over $\mc{D}$ and is free from any deterministic matrix from $\mc{M}$.

\section{Proof of Proposition~\ref{prop_cauchyB}}\label{appx_cauchyB}
Since $\widetilde{\mb{L}}$ is an operator-valued semicircular variable over $\mc{D}$ and $\widetilde{\mb{L}}$ are free from $\overline{\mb{L}}$ over $\mc{D}$, the limiting spectral distribution of $\mb{L}$ is a free additive convolution of the limiting spectral distributions of $\widetilde{\mb{L}}$ and $\overline{\mb{L}}$. Specifically, the operator-valued Cauchy transform $\mc{G}_{\mb{L}}^{\mc{D}}$ can be calculated via the subordination formula~(\ref{eqGL_sub}). Recall that the $R$-transform $\mc{R}_{\widetilde{\mb{L}}}^{\mc{D}}\left(\cdot\right)$ is the free cumulant generating function of $\widetilde{\mb{L}}$ with the following formal power series expansion:
\begin{align}
	\mc{R}_{\widetilde{\mb{L}}}^{\mc{D}}\left(\mb{K}\right) = \kappa_1^{\mc{D}}(\widetilde{\mb{K}}) + \kappa_2^{\mc{D}}(\widetilde{\mb{L}}\mb{K},\widetilde{\mb{L}}) + \kappa_3^{\mc{D}}(\widetilde{\mb{L}}\mb{K},\widetilde{\mb{L}}\mb{K},\widetilde{\mb{L}}) + \cdots,\label{eqRBL}
\end{align}
where $\kappa_i^{\mc{D}}$ denotes the $i$-th free cumulant of $\widetilde{\mb{L}}$ over $\mc{D}$. In addition, since $\widetilde{\mb{L}}$ is semicircular over $\mc{D}$, all its cumulants in (\ref{eqRBL}) except $\kappa_2^{\mc{D}}$ are zero. Therefore, the $R$-transform $\mc{R}_{\widetilde{\mb{L}}}^{\mc{D}}\left(\mb{K}\right)$ reduces to the covariance function of $\widetilde{\mb{L}}$ over $\mc{D}$ parameterized by $\mb{K}$, i.e.,
\begin{align}
	\mc{R}_{\widetilde{\mb{L}}}^{\mc{D}}(\mb{K}) &= \mbb{E}_{\mc{D}}\left[\widetilde{\mb{L}}\mb{K}\widetilde{\mb{L}}\right]\nonumber\\
	&= \begin{bmatrix}
		\sum_{k=1}^K \widetilde{\eta}_k(\mb{C}_k) & & & \\
		& \widetilde{\bm{\zeta}}(\widetilde{\mb{D}}) & & \\
		& & \sum_{k=0}^K \zeta_k(\mb{D}_k) & \\
		& & & \bm{\eta}(\widetilde{\mb{C}})\\
	\end{bmatrix},\label{eqR}
\end{align}
where $\widetilde{\bm{\zeta}}(\widetilde{\mb{D}}) = \mathrm{blkdiag}\left\{\widetilde{\zeta}_0(\widetilde{\mb{D}}),\ldots,\widetilde{\zeta}_K(\widetilde{\mb{D}})\right\}$ and $\bm{\eta}(\widetilde{\mb{C}}) = \mathrm{blkdiag}\left\{\mb{0}_R, \eta_1(\widetilde{\mb{C}}),\ldots,\eta_K(\widetilde{\mb{C}})\right\}$.

Since $\mc{G}_{\mb{L}}^{\mc{D}}(\mb{\Lambda}(z))\in\mc{D}$, by same matrix partitioning as in (\ref{eqK}), $\mc{G}_{\mb{L}}^{\mc{D}}(\mb{\Lambda}(z))$ is partitioned into 
\begin{align}
	\mc{G}_{\mb{L}}^{\mc{D}}(\mb{\Lambda}(z)) = \mathrm{blkdiag}\left\{\mc{G}_{\widetilde{\mb{C}}}(z), \bm{\mc{G}}_{{\mb{D}}}(z),{\mc{G}}_{\widetilde{\mb{D}}}(z),\bm{\mc{G}}_{{\mb{C}}}(z)\right\},\label{eqGBL_partition}
\end{align}
where $\bm{\mc{G}}_{{\mb{D}}}(z) = \mathrm{blkdiag}\left\{{\mc{G}}_{{\mb{D}_0}}(z),\ldots,{\mc{G}}_{{\mb{D}_K}}(z)\right\}$ and $\bm{\mc{G}}_{{\mb{C}}}(z) = \mathrm{blkdiag}\left\{\mb{0}_R, {\mc{G}}_{{\mb{C}_1}}(z),\ldots,{\mc{G}}_{{\mb{C}_K}}(z)\right\}$. Note that the upper-left block $\left\{\mc{G}_{\mb{L}}^{\mc{D}}(\mb{\Lambda}(z))\right\}^{(1,1)} = \mc{G}_{\widetilde{\mb{C}}}(z)$, which is then used to compute $\mc{G}_{\mb{B}}(z) = \frac{1}{R}\mathrm{Tr}(\mc{G}_{\widetilde{\mb{C}}}(z))$. 

By replacing $\mb{K}$ in (\ref{eqR}) with $\mc{G}_{\mb{L}}^{\mc{D}}(\mb{\Lambda}(z))$ in (\ref{eqGBL_partition}), and substituting $\overline{\mb{L}}$ and $\mc{R}_{\widetilde{\mb{L}}}^{\mc{D}}$ with (\ref{eqBL_bar}) and (\ref{eqR}), respectively, we obtain $\mc{G}_{\mb{L}}^{\mc{D}}(\mb{\Lambda}(z))$ as
\begin{align}
	\mc{G}_{\mb{L}}^{\mc{D}}(\mb{\Lambda}(z)) &= \begin{bmatrix}
		\mc{G}_{\widetilde{\mb{C}}}(z) & & & \\
		& \bm{\mc{G}}_{{\mb{D}}}(z) & & \\
		& & {\mc{G}}_{\widetilde{\mb{D}}}(z) & \\
		& & & \bm{\mc{G}}_{{\mb{C}}}(z)
	\end{bmatrix} = \mbb{E}_{\mc{D}} \begin{pmatrix}
		\widetilde{\mb{\Psi}}(z) & \mb{0} & \mb{0} & -\overline{\mb{G}} \\
		\mb{0} & \widetilde{\mb{\Phi}}(z) & -\overline{\mb{F}} & \mb{I}_{L}\\
		\mb{0} & -\overline{\mb{F}}^\dagger & \mb{\Phi}(z) & \mb{0} \\
		-\overline{\mb{G}}^\dagger & \mb{I}_{L} & \mb{0} & \mb{\Psi}(z)\\		
	\end{pmatrix}^{-1},\label{eqGBL}
\end{align}
where $\widetilde{\mb{\Psi}}(z)$, $\mb{\Psi}(z)$, $\widetilde{\mb{\Phi}}(z)$, and $\mb{\Phi}(z)$ are given in (\ref{eqPsit})-(\ref{eqPhi}). By invoking Lemma~\ref{lemmaBlockInv2} to the RHS of (\ref{eqGBL}) and taking expectation over $\mc{D}$,
the matrix-valued function $\mc{G}_{\widetilde{\mb{C}}}(z) = \mb{A}_1^{-1}$, and $\mc{G}_{\mb{D}}(z)$, $\mc{G}_{\widetilde{\mb{D}}}(z)$, $\mc{G}_{\mb{C}}(z)$ are the diagonal blocks of the matrix $\mb{A}_2^{-1}$, where $\mb{A}_1$ and $\mb{A}_2$ are given by
\begin{align}
	\mb{A}_1 &= \widetilde{\mb{\Psi}}(z) - \begin{bmatrix}
		\mb{0} & \mb{0} & \overline{\mb{G}}
	\end{bmatrix} \begin{bmatrix}
		\widetilde{\mb{\Phi}}(z) & -\overline{\mb{F}} & \mb{I}_{L}\\
		-\overline{\mb{F}}^\dagger & \mb{\Phi}(z) & \mb{0} \\
		\mb{I}_{L} & \mb{0} & \mb{\Psi}(z)
	\end{bmatrix}^{-1} \begin{bmatrix}
	\mb{0} \\ \mb{0} \\ \overline{\mb{G}}^\dagger
\end{bmatrix} ,\label{eqA1}\\
	\mb{A}_2 &= \begin{bmatrix}
		\widetilde{\mb{\Phi}}(z) & -\overline{\mb{F}} & \mb{I}_{L}\\
		-\overline{\mb{F}}^\dagger & \mb{\Phi}(z) & \mb{0} \\
		\mb{I}_{L} & \mb{0} & \mb{\Psi}(z) 
	\end{bmatrix} - \begin{bmatrix}
	\mb{0} \\ \mb{0} \\ \overline{\mb{G}}^\dagger
\end{bmatrix} \widetilde{\mb{\Psi}}(z)^{-1}  \begin{bmatrix}
\mb{0} & \mb{0} & \overline{\mb{G}}
\end{bmatrix}.
\end{align}
Applying Lemma~\ref{lemmaBlockInv3}, the RHS of (\ref{eqA1}) can be further derived as
\begin{align}
	\mb{A}_1 = \widetilde{\mb{\Psi}}(z) - \overline{\mb{G}} \bm{S}^{-1} \overline{\mb{G}}^\dagger,\label{eqA1_2}
\end{align}
where $\bm{S}=\mb{\Xi}(z)$ and is calculated in (\ref{eqS}) as
\begin{align}
	\bm{S} &= \mb{\Xi}(z) = \mb{\Psi}(z) - \widetilde{\mb{\Phi}}(z)^{-1} - \widetilde{\mb{\Phi}}(z)^{-1}\overline{\mb{F}}\left(\mb{\Phi}(z) - \overline{\mb{F}}^\dagger\widetilde{\mb{\Phi}}(z)^{-1}\overline{\mb{F}}\right)^{-1}\overline{\mb{F}}^\dagger\widetilde{\mb{\Phi}}(z)^{-1}\nonumber\\
	&= \mb{\Psi}(z) - \left(\widetilde{\mb{\Phi}}(z) - \overline{\mb{F}}\mb{\Phi}(z)^{-1}\overline{\mb{F}}^\dagger\right)^{-1}.\label{eqXi}
\end{align} 
The second equality of (\ref{eqXi}) is obtained by applying Lemma~\ref{lemmaSumInv}. Then,  (\ref{eqGCt}) is established by combining (\ref{eqA1_2}) and (\ref{eqXi}). 

The inverse of $\mb{A}_2$ can be explicitly calculated via Lemma~\ref{lemmaBlockInv3}, where $\bm{E} = \widetilde{\mb{\Phi}}(z)$, $\bm{F} = \bm{H}^\dagger = -\overline{\mb{F}}$, $\bm{G} = \bm{L} = \mb{I}_L$, $\bm{J} = \mb{\Phi}(z)$, $\bm{K} = \bm{M} = \mb{0}$, and $\bm{N} = \mb{\Psi}(z) - \overline{\mb{G}}^\dagger \widetilde{\mb{\Psi}}(z)^{-1}\overline{\mb{G}}$. We further let $\bm{T} = \mb{\Phi}(z) - \overline{\mb{F}}^\dagger \widetilde{\mb{\Phi}}(z)^{-1} \overline{\mb{F}}$ and $\widetilde{\bm{T}} = \widetilde{\mb{\Phi}}(z) - \overline{\mb{F}} \mb{\Phi}(z)^{-1} \overline{\mb{F}}^\dagger$. Then, the matrix-valued functions $\mc{G}_{\mb{D}}(z)$, $\mc{G}_{\widetilde{\mb{D}}}(z)$, and $\mc{G}_{\mb{C}}(z)$, being the diagonal blocks of $\mb{A}_2^{-1}$, are given by
\begin{align}
	\mc{G}_{\mb{D}}(z) &= \widetilde{\mb{\Phi}}(z)^{-1} + \widetilde{\mb{\Phi}}(z)^{-1}\overline{\mb{F}} \bm{T}^{-1} \overline{\mb{F}}^\dagger\widetilde{\mb{\Phi}}(z)^{-1} + \widetilde{\bm{T}}^{-1} \left( \bm{N} - \widetilde{\bm{T}}^{-1}  \right)^{-1} \widetilde{\bm{T}}^{-1},\label{eqGD}\\
	\mc{G}_{\widetilde{\mb{D}}}(z) &= \bm{T}^{-1} + \bm{T}^{-1} \overline{\mb{F}}^\dagger\widetilde{\mb{\Phi}}(z)^{-1} \left(\bm{N} - \widetilde{\bm{T}}^{-1}\right)^{-1} \widetilde{\mb{\Phi}}(z)^{-1}\overline{\mb{F}}\ \bm{T}^{-1}, \label{eqGDt_2}\\
	\mc{G}_{\mb{C}}(z) &= \left(\bm{N} - \left(\widetilde{\mb{\Phi}}(z)^{-1} + \widetilde{\mb{\Phi}}(z)^{-1}\overline{\mb{F}} \bm{T}^{-1} \overline{\mb{F}}^\dagger\widetilde{\mb{\Phi}}(z)^{-1}\right)\right)^{-1}.\label{eqGC}
\end{align}
Finally, applying Lemma~\ref{lemmaSumInv} to (\ref{eqGD})-(\ref{eqGC}), we obtain $\mc{G}_{\mb{D}}(z)$, $\mc{G}_{\widetilde{\mb{D}}}(z)$, and $\mc{G}_{\mb{C}}(z)$ as in (\ref{eqGCk})-(\ref{eqGDk}).
\end{appendices}

\ifCLASSOPTIONcaptionsoff
  \newpage
\fi


\begin{thebibliography}{99}
	
	\bibitem{SaadMagazine2020}
	W. Saad, M. Bennis, and M. Chen, “A vision of 6G wireless systems: Applications, trends, technologies, and open research problems,” {\it IEEE Netw.}, vol. 34, no. 3, pp. 134–142, May 2020.
	
	\bibitem{Telatar1999}
	I. E. Telatar, “Capacity of multi-antenna Gaussian channels,” {\it Eur. Trans. Telecommun.}, vol. 10, no. 6, pp. 585–595, 1999.
	
	\bibitem{ShinTIT2003}
	H. Shin and J. H. Lee, “Capacity of multiple-antenna fading channels:
	Spatial fading correlation, double scattering, and keyhole,” {\it IEEE Trans.	Inf. Theory}, vol. 49, no. 10, pp. 2636–2647, Oct. 2003.
	
	\bibitem{PanMagazine2021}
	C. Pan, H. Ren, K. Wang, J. F. Kolb, M. Elkashlan, M. Chen, M. Di Renzo, Y. Hao, J. Wang, and A. L. Swindlehurst, “Reconfigurable intelligent surfaces for 6G systems: Principles, applications, and research directions,”
	{\it IEEE Commun. Mag.}, vol. 59, no. 6, pp. 14–20, June 2021.
	
	\bibitem{WuMagazine2020}
	Q. Wu and R. Zhang, “Towards smart and reconfigurable environment: Intelligent reflecting surface aided wireless network,” {\it IEEE Commun. Mag.}, vol. 58, no. 1, pp. 106–112, Jan. 2020.
	
	
	\bibitem{ZhangTVT2021}
	Y. Zhang, J. Zhang, M. Di Renzo, H. Xiao, and B. Ai, “Performance analysis of RIS-aided systems with practical phase shift and amplitude response,” {\it IEEE Trans. Veh. Technol.}, vol. 70, no. 5, pp. 4501–4511, May 2021.
	
	\bibitem{YangWCL2020}
	L. Yang, F. Meng, Q. Wu, D. B. Da Costa, and M.-S. Alouini, “Accurate closed-form approximations to channel distributions of RIS-aided wireless systems,” {\it IEEE Wireless Commun. Lett.}, vol. 9, no. 11, pp. 1985–1989, Nov. 2020.
	
	\bibitem{SalhabWCL2021}
	A. M. Salhab and M. H. Samuh, “Accurate performance analysis of reconfigurable intelligent surfaces over Rician fading channels,” {\it IEEE Wireless Commun. Lett.}, vol. 10, no. 5, pp. 1051–1055, May 2021.
	
	\bibitem{DashCL2022}
	S. P. Dash, S. Joshi, and S. Aissa, “Envelope distribution of two correlated complex Gaussian random variables and application to the performance evaluation of RIS-assisted communications,” {\it IEEE Commun. Lett.}, vol. 26, no. 9, pp. 2018-2022, Sept. 2022.
	
	\bibitem{XieWCL2022}
	Z. Xie, W. Yi, X. Wu, Y. Liu, and A. Nallanathan, “Downlink multi-RIS aided transmission in backhaul limited networks,” {\it IEEE Wireless Commun. Lett.}, vol. 11, no. 7, pp. 1458–1462, July 2022.
	
	\bibitem{HashemiTVT2021}
	R. Hashemi, S. Ali, N. H. Mahmood, and M. Latva-aho, “Average rate and error probability analysis in short packet communications over RIS-aided URLLC systems,” {\it IEEE Trans. Veh. Technol.}, vol. 70, no. 10, pp. 10320–10334, Oct. 2021.
	
	\bibitem{YildirimTCOM2021}
	I. Yildirim, A. Uyrus, and E. Basar, “Modeling and analysis of reconfigurable intelligent surfaces for indoor and outdoor applications in future wireless networks,” {\it IEEE Trans. Commun.}, vol. 69, no. 2, pp. 1290–1301, Feb. 2021.
	
	\bibitem{YangWCL2021}
	L. Yang, Y. Yang, D. B. Da Costa, and I. Trigui, “Outage probability and capacity scaling law of multiple RIS-aided networks,” {\it IEEE Wireless Commun. Lett.}, vol. 10, no. 2, pp. 256–260, Feb. 2021.
	
	\bibitem{DoTCOM2021}
	T. N. Do, G. Kaddoum, T. L. Nguyen, D. B. Da Costa, and Z. J. Haas, “Multi-RIS-aided wireless systems: Statistical characterization and performance analysis,” {\it IEEE Trans. Commun.}, vol. 69, no. 12, pp. 8641–8658, Dec. 2021.
	
	\bibitem{ShiTWC2022}
	Z. Shi, H. Wang, Y. Fu, G. Yang, S. Ma, and F. Gao, “Outage analysis of reconfigurable intelligent surface aided MIMO communications with statistical CSI,” {\it IEEE Trans. Wireless Commun.}, vol. 21, no. 2, pp. 823–839, Feb. 2022.
	
	\bibitem{Carter1977}
	B. D. Carter and M. D. Springer, “The distribution of products, quotients and powers of independent H-function variates,” {\it SIAM J. Applied	Maths.}, vol. 33, no. 4, pp. 542–558, Dec. 1977.
	
	\bibitem{AyachTWC2014}
	O. E. Ayach, S. Rajagopal, S. Abu-Surra, Z. Pi, and R. W. Heath, “Spatially sparse precoding in millimeter wave MIMO systems,” {\it IEEE Trans. Wireless Commun.}, vol. 13, no. 3, pp. 1499–1513, Mar. 2014.
	
	\bibitem{LiTWCPress}
	R. Li, S. Sun, Y. Chen, C. Han, and M. Tao, “Ergodic achievable rate analysis and optimization of RIS-assisted millimeter-wave MIMO communication systems,” {\it IEEE Trans. Wireless Commun.}, in press.

	\bibitem{XuTCOM2021}
	K. Xu, J. Zhang, X. Yang, S. Ma, and G. Yang, “On the sum-rate of RIS-assisted MIMO multiple-access channels over spatially correlated Rician fading,” {\it IEEE Trans. Commun.}, vol. 69, no. 12, pp. 8228–8241, Dec. 2021.
	
	\bibitem{ZhangJSTSP2022}
	X. Zhang, X. Yu, and S. H. Song, “Outage probability and finite-SNR DMT analysis for IRS-aided MIMO systems: How large IRSs need to be?,” {\it IEEE J. Sel. Top. Signal Process.}, vol. 16, no. 5, pp. 1070–1085, Aug. 2022.
	
	\bibitem{ZhangTWC2021}
	J. Zhang, J. Liu, S. Ma, C.-K. Wen, and S. Jin, “Large system achievable rate analysis of RIS-assisted MIMO wireless communication with statistical CSIT,” {\it IEEE Trans. Wireless Commun.}, vol. 20, no. 9, pp. 5572–5585, Sept. 2021.
	
	\bibitem{WeichselbergerTWC06}
	W. Weichselberger, M. Herdin, H. \"{O}zcelik, and E. Bonek, \lq\lq A stochastic MIMO channel model with joint correlation of both link ends,\rq\rq\ {\it IEEE Trans. Wireless Commun.}, vol. 5, no. 1, pp. 90-100, Jan. 2006.
	
	\bibitem{ZhengTIT2017}
	Z. Zheng, L. Wei, R. Speicher, R. R. M\"{u}ller, J. H\"{a}m\"{a}l\"{a}inen, and J. Corander, “Asymptotic analysis of Rayleigh product channels: A free probability approach,” {\it IEEE Trans. Inf. Theory}, vol. 63, no. 3, pp. 1731–1745, Mar. 2017.

	\bibitem{MullerTIT2002a}
	R. R. M\"{u}ller, “A random matrix model of communication via antenna arrays,” {\it IEEE Trans. Inf. Theory}, vol. 48, no. 9, pp. 2495–2506, Sept. 2002.

	\bibitem{DebbahBook}
	R. Couillet and M. Debbah, {\it Random Matrix Methods for Wireless Communications}. Cambridge: Cambridge University Press, 2011.

	\bibitem{MullerTIT2002}
	R. R. M\"{u}ller, \lq\lq On the asymptotic eigenvalue distribution of concatenated vector-valued fading channels,\rq\rq\ {\it IEEE Trans. Inf. Theory}, vol. 48, no. 7, pp. 2086-2091, July 2002.
	 
	\bibitem{Benaych-Georges2009}
	F. Benaych-Georges, \lq\lq Rectangular random matrices, related free entropy and free Fischer's information,\rq\rq\ {\it J. Operator Th.}, vol. 62, no. 2, pp. 371-419, 2009.
	
	\bibitem{Capitaine2007}
	M. Capitaine and C. Donati-Martin, \lq\lq Strong asymptotic freeness for Wigner and Wishart matrices,\rq\rq\ {\it Indiana Univ. Math. J.}, vol. 56, no. 2, pp. 767-803, 2007.
	
	\bibitem{MingoSpeicherBook}
	J. A. Mingo and R. Speicher, {\it Free Probability and Random Matrices}. New York: Springer, 2017.
	
	\bibitem{Speicher2013}
	S. T. Belinschi, T. Mai, and R. Speicher, \lq\lq Analytic subordination theory of operator-valued free additive convolution and the solution of a general random matrix problem,\rq\rq\ {\it J. Reine Angew. Math. (Crelles J.)}, vol. 2017, no. 732, pp. 21–53, Apr. 2017.
	
	\bibitem{HornBook}
	R. A. Horn and C. R. Johnson, {\it Matrix Analysis}. New York: Cambridge University Press, 2013.
	
	\bibitem{LuTIT2016}
	A.-A. Lu, X. Gao, and C. Xiao, \lq\lq Free deterministic equivalents for the analysis of MIMO multiple access channel,\rq\rq\ {\it IEEE Trans. Inf. Theory}, vol. 62, no. 8, pp. 4604-4629, Aug. 2016.

\end{thebibliography}
\end{document}